%
\documentclass[10pt,orivec,anonymous=true]{llncs}
\usepackage{graphicx}
\usepackage{amssymb}
\usepackage{amsmath}
\usepackage{dcolumn}
\usepackage{bm, color}
\usepackage{graphicx}
\usepackage{diagbox} 
\usepackage{subfig}
\usepackage{caption}
\usepackage{algpseudocode}
\usepackage{algorithmicx}
\usepackage{algorithm}
\usepackage{makecell}
%
\newcommand{\True}{\textbf{true}}
\newcommand{\False}{\textbf{false}}

\renewcommand{\epsilon}{\varepsilon}

\newcommand{\norm}[1]{\|#1\|}

\renewcommand{\dh}{\ensuremath{\mathcal{D(H)}}}

\DeclareMathOperator{\id}{id}

\renewcommand{\c}{\ensuremath{\mathcal{C}}}
\newcommand{\e}{\ensuremath{\mathcal{E}}}
\newcommand{\f}{\ensuremath{\mathcal{F}}}

\newcommand{\h}{\ensuremath{\mathcal{H}}}

\renewcommand{\l}{\ensuremath{\mathcal{L}}}


 \newcommand{\tr}{{\rm tr}} 
 \renewcommand{\a}{\ensuremath{\mathcal{A}}}

\newcommand{\bra}[1]{\langle #1 |}
\newcommand{\ket}[1]{| #1 \rangle}
\newcommand{\braket}[2]{\langle #1 | #2 \rangle}
\newcommand{\ketbra}[2]{| #1 \rangle \langle #2 |}



\newcommand{\cE}{\ensuremath{\mathcal{E}}}

\newcommand{\cH}{\ensuremath{\mathcal{H}}}
\newcommand{\n}{\ensuremath{\mathcal{N}}}
\begin{document}
\title{Robustness Verification of Quantum Classifiers}
%
%
\author{}
\institute{}

\author{Ji Guan\inst{1} \and
Wang Fang\inst{1,2} \and
Mingsheng Ying\inst{3,1,4}}
\authorrunning{J. Guan et al.}
%
\institute{State Key Laboratory of Computer Science, Institute of Software, Chinese Academy of Sciences, Beijing 100190, China \and
University of Chinese Academy of Sciences, Beijing 100049, China \and 
Center for Quantum Software and Information,
University of Technology Sydney, NSW 2007, Australia
 \and
Department of Computer Science and Technology, Tsinghua University, Beijing 100084, China}

\maketitle              
\pagestyle{plain}
\begin{abstract}
Several important models of machine learning algorithms have been successfully generalized to the quantum world, with potential speedup to training classical classifiers and applications to data analytics in quantum physics that can be implemented on the near future quantum computers. However, quantum noise is a major obstacle to the practical implementation of quantum machine learning. In this work, we define a formal framework for the robustness verification and analysis of quantum machine learning algorithms against noises. A robust bound is derived and an algorithm is developed to check whether or not a quantum machine learning algorithm is robust with respect to quantum training data. In particular, this algorithm can find adversarial examples during checking. Our approach is implemented on Google's TensorFlow Quantum and can verify the robustness of quantum machine learning algorithms with respect to a small disturbance of noises, derived from the surrounding environment. The effectiveness of our robust bound and algorithm is confirmed by the experimental results, including quantum bits classification as the ``Hello World" example, quantum phase recognition and cluster excitation detection from real world intractable physical problems, and the classification of MNIST from the classical world. 

\keywords{Quantum Machine Learning  \and Robustness Verification\and Adversarial Examples \and Robust Bound.}
\end{abstract}
\section{Introduction}
In the last few years, the successful interplay between machine learning and quantum physics shed new light on both fields. On the one hand, machine learning has been dramatically developed to satisfy the need of the industry over the past two decades. At the same time, many challenging quantum physical problems have been solved by automated learning. Notably, inaccessible quantum many-body problems have been solved by neural networks, one instance of machine learning~\cite{carleo2017solving}. On the other hand, as the new model of computation under quantum mechanics, quantum computing has been proved that it can (exponentially) speed up classical algorithms for some important problems~\cite{nielsen2010quantum}. This motivates the development of quantum machine learning and provides the possibility of improving the existing computational power of machine learning to a new level (see the review papers~\cite{biamonte2017quantum,dunjko2018machine} for the details). After that, quantum machine learning was integrated into solving real world problems in quantum physics. One essential example is that quantum convolutional neural networks inspired by machine learning were proposed to implement quantum phase recognition~\cite{cong2019quantum}. Quantum phase recognition asks whether a given input quantum state belongs to a particular quantum phase of matter. At the same time, more provable advantages of quantum machine learning than the classical counterpart have been reported. For instance, the training complexity of quantum models has an exponential improvement on certain tasks~\cite{huang2101information}. Stepping into industries,  Google recently built up a framework \textit{TensorFlow Quantum} for the design and training of quantum machine learning within its famous classical machine learning platform--- TensorFlow~\cite{broughton2020tensorflow}.

Even though quantum machine learning outperforms the classical counterpart in some way, the difficulties in the classical world are expected to be encountered in the quantum case. Classical machine learning has been found to be vulnerable to intentionally-crafted adversarial examples  (e.g.~\cite{huang2011adversarial,goodfellow2014explaining}). Adversarial examples are inputs to a machine learning algorithm that an attacker has crafted to cause the algorithm to make a mistake. One essential mission of machine learning is to prove the absence of or detect adversarial examples used in the defense strategy---adversarial training~\cite{madry2017towards}---appending adversarial examples to the training dataset and retraining the machine learning algorithm to be robust to these examples. However, this goal is not easily achieved~\cite{carlini2017adversarial}. The machine learning community has developed several interesting ideas on designing speciﬁc attack algorithms (e.g. \cite{brown2017adversarial,madry2017towards}) to generate adversarial examples, which is far from measuring the robustness against any adversary. Recently, the formal method community has taken initial steps in this direction~\cite{tran2020verification,elboher2020abstraction,fremont2020formal,kwiatkowska2019safety}, by verifying the robustness of classical machine learning algorithms in a provable way: either a formal guarantee that the algorithms are robust for a given input or a counter-example (adversarial example) is provided if an input is not robust. Some tools have been developed, such as VerifAI~\cite{dreossi2019verifai} and NNV~\cite{tran2020nnv}.
This phenomenon of the vulnerability is more common in the quantum world since quantum noise is inevitable in quantum computation, at least in the current NISQ (Noisy Intermediate-Scale  Quantum) era, and thus led to a series of recent works on quantum machine learning robustness against specific noises. For example, Lu et al.~\cite{lu2019quantum} studied the robustness to various classical adversarial attacks; Du et al.~\cite{du2020quantum} proved that appending depolarization noise in quantum circuits for classifications, a robust bound against adversaries can be derived; Liu and Wittek~\cite{liu2020vulnerability} gave a robust bound for the quantum noise coming from a special unitary group. Very recently, Weber et al.~\cite{weber2020optimal} formalized a link between binary quantum hypothesis testing~\cite{helstrom1967detection} and robust quantum machine learning algorithms for classification tasks. 

Up to our best knowledge, the existing studies of quantum machine learning robustness only consider the situation of a \textit{known} noise source. 
However, a fundamental difference between quantum and classical machine learning is that the quantum attacker is usually the surroundings instead of humans in the classical case, and the information of the environment is unknown. To protect against an \textit{unknown}  adversary, we need to derive a robust guarantee against a worst-case scenario, from which the commonly-assumed known noise sources (e.g.~depolarization noise~\cite{du2020quantum}) are usually far.  Yet in the case of unknown noise, several basic issues are still unsolved: \begin{itemize}\item In theory, it is unclear how to compute a tight and even the optimal bound of the robustness for any given quantum machine learning algorithm. 
\item In practice, an efficient way to find an adversarial example, which can be used to retraining the algorithm to defense the noise, is lacking. Indeed, we do not even know which metric is a better choice measuring the robustness against noise, the same as the classical case against human attackers~\cite{sharif2018suitability}.\end{itemize}

In this work, we define a formal framework for the robustness verification and analysis of quantum machine learning algorithms against noises in which the above problems can be studied in a principled way. 
More specifically, we choose to use fidelity as the metric measuring the robustness as it is one of the most widely used quantity to quantify the uncertainty of noise in the process of quantum computation, and commonly used in quantum engineering and experimental communities (e.g.~\cite{roque2021engineering,torosov2011smooth}). Based on this, an analytical robust bound for any quantum machine learning classification algorithm is obtained and can be applied to approximately checking the robustness of quantum machine learning algorithms. Furthermore, we show that computing the optimal robust bound can be reduced to solving a Semidefinite Programming (SDP) problem. These results lead to an algorithm to exactly and efficiently check whether or not a quantum machine learning algorithm is robust with respect to the training data. A special strength of this algorithm is that it can identify useful new training data (adversarial examples) during checking, and these data can be used to implement adversarial training as the same as classical robustness verification.   The effectiveness of our robust bound and algorithms is confirmed by the case studies of quantum bits classification as the ``Hello World" example of quantum machine learning algorithms, quantum phase recognition and cluster excitation detection from real world intractable physical problems, and the classification of MNIST from the classical world.  

In summary, the main technical contributions of the paper are as follows.

\begin{itemize}
    \item  \textit{Computing the optimal robust bound} of quantum machine classification algorithms is reduced to an SDP (Semidefinite Programming) problem;
    \item \textit{An efficient algorithm} to check the robustness of quantum machine learning algorithms and detect adversarial examples is developed;
    \item The \textit{implementation} of the robustness verification algorithm on Google's TensorFlow Quantum;
    \item \textit{Case studies} -- Checking the robustness of several popular quantum machine learning algorithms for quantum bits classification, cluster excitation detection and the classification  of  MNIST (which are all implemented in Google's TensorFlow Quantum), and quantum phase recognition. 
\end{itemize}
\section{Quantum Data and Computation Models}\label{preliminary}
For convenience of the reader, in this section, we recall some basic concepts of quantum data (states) and the quantum computation model.

The basic data of classical computers are bits, represented by two digits $0$ and $1$. In quantum computing, quantum bits (qubit) play the same role. A qubit is expressed by a normalized complex vector $\ket{\phi}=\left(\begin{array}{cc}a\\ b\end{array}\right)=a\ket{0}+b\ket{1}$
with complex numbers $a$ and $b$ satisfying the  normalization condition $|a|^2+|b|^2=1$. Here, $\ket{0}=\left(\begin{array}{cc}1\\ 0\end{array}\right)$, $\ket{1}=\left(\begin{array}{cc}0\\ 1\end{array}\right)$ correspond to bits $0,1$ respectively, and $\{\ket{0}$, $\ket{1}\}$ is an orthonormal basis of a 2-dimensional Hilbert (linear) space. In general, for a quantum computer consisting of $n$ qubits, a quantum datum is a normalized complex vector $\ket{\psi}$ in a $2^n$-dimensional Hilbert space $\h$. Such a $\ket{\psi}$ is usually called a pure state in the literature of quantum computation.

As a model for computation, a quantum circuit consists of a sequence of, say $m$ quantum logic gates. Each quantum gate can be mathematically represented by a unitary matrix $U_i$ on $\h$, i.e., $U_i^\dagger U_i=U_iU_i^\dagger=I$, where $U_i^\dagger$ is the conjugate transpose of $U_i$ and  $I$ is  the identity matrix on $\h$. Then the circuit is represented by the unitary matrix $U= U_m\cdots U_1$.  
If the quantum datum $\ket{\psi}$ is input to the circuit, then the output is a quantum datum: 
\begin{equation}\label{eq:unitary}
    \ket{\psi'} = U\ket{\psi}.
\end{equation}

In practice, 
a quantum datum may not be completely known and can be thought of as a mixed state or ensemble $\{(p_{k},\ket{\psi_{k}})\}_{k}$, meaning that it is at $\ket{\psi_{k}}$ with probability $p_{k}$. Mathematically, it can be described by a density operator $\rho$ (Hermitian  positive semidefinite matrix with unit trace\footnote{$\rho$ has unit trace if $\tr(\rho) = 1$, where trace $\tr(\rho)$ of $\rho$ is defined as the summation of diagonal elements of $\rho$.}) on $\h$: 
\begin{equation}\label{preliminary-1}
    \rho = \sum_{k} p_{k}\ket{\psi_{k}}\bra{\psi_{k}},
\end{equation}
where  $\bra{\psi_{k}}$ is the conjugate  transpose of $\ket{\psi_{k}}$, i.e., $\ket{\psi_{k}}=\bra{\psi_{k}}^\dagger$. In this case, the model of quantum computation is tuned to be a super-operator $\e$, i.e. a mapping from matrices to matrices. It can be written as 
\begin{equation}\label{Eq:evolution}
    \rho'=\e(\rho)
\end{equation}
Here, $\rho$ and $\rho'$ are the input and output data (mixed states) of quantum computation $\e$, respectively. Not every super-operator $\e$ is meaningful in physics. It is required to satisfy the following conditions:
\begin{itemize}
    \item $\e$ is trace-preserving: $\tr(\e(\rho)) = \tr(\rho)$ for all mixed state $\rho$ on $\h$;
    \item $\e$ is completely positive: for any Hilbert space $\h'$, the trivially extended operator $\id_{\h'} \otimes \e$ maps density operators  to density operators on $\h' \otimes \h$, where $\otimes$ denotes the tensor product and $\id_{\h'}$ is the identity map on $\h'$.
\end{itemize}
Such a super-operator $\e$  admits a Kraus matrix form~\cite{nielsen2010quantum}: there exists a set of matrices $\{E_k\}_k$ on $\h$ such that 
$$\e(\rho)=\sum_{k}E_k\rho E_k^\dagger.$$
Here $\{E_k\}_k$ is called Kraus matrices of $\e$.

The behind dynamics  of quantum computers is governed by quantum mechanics, which is applied at the microscopic scale (near or less than $10^{-9}$ meters). At this level, we cannot directly readout the quantum data as the same to the classical counterpart. The only way to extract information from it is  through a quantum measurement, which is mathematically modeled by a set $\{M_{k}\}_{k=1}^m$ of matrices on its state (Hilbert)  space $\h$ with $\sum_{k} M_{k}^\dagger M_{k}=I$. This observing process is probabilistic: if the system is currently in state $\rho$, then a measurement outcome $k$ is obtained with probability  
\begin{equation}\label{Eq:measure_probability}
    p_{k}=\tr(M_{k}^\dagger M_{k}\rho).
\end{equation}
After the measurement, the system's state will be collapsed (changed), depending on the measurement outcome $k$, which is vitally different to the classical computation. 
If the outcome is $k$, the post-measurement  state becomes
\begin{equation}\label{Eq:measure_output}
    \rho'_k=\frac{M_{k}\rho M_{k}^\dagger}{\tr(M_{k}^\dagger M_{k}\rho)}.
\end{equation}
This special property makes it hard to accurately estimate  the distribution $\{p_k\}_k$ unless enough many copies of $\rho$ are provided.

In summary, quantum data have two different forms --- pure state $\ket{\psi}$ and mixed state $\rho$ corresponding to the computation model as a unitary matrix $U$ or a  super-operator $\e$, respectively. Not surprisingly, the latter is a generalization of the former by putting:
\[\rho=\ketbra{\psi}{\psi}, \qquad \qquad \e(\rho)=U\rho U^\dagger.\]
Because of this, the results obtained for mixed states $\rho$ can also be applied to pure states $\ket{\psi}$. Thus, in this paper, we mainly consider mixed states as the quantum data and super-operators as the model of quantum computation.

\section{Quantum Classification  Algorithms}
In this section, we briefly recall quantum classification algorithms. They are designed for \textit{classification of quantum data}.  Essentially, they share the same basic ideas with their classical counterparts but deal with quantum data in the quantum computation model. 
\subsection{Basic Definitions}
In this paper, we focus on a specific learning model called quantum supervised classification. Given a Hilbert space $\h$, we write $\dh$ for the set of all (mixed) quantum states on $\h$ (see its definition in Eq. (\ref{preliminary-1})).
\begin{definition}
  A quantum 
  classification algorithm $\a$ is a mapping  $\dh\rightarrow\c$, where $\c$ is the set of classes we are interested in.
\end{definition}

Following the training strategy of classical machine learning, 
the classification $\a$ is learned through a dataset $T$ instead of being pre-defined. This training dataset $T=\{(\rho_i,c_i)\}_{i=1}^N$ consists of $N<\infty$ pairs  $(\rho_i,c_i)$, meaning that quantum state $\rho_i$ belongs to  class $c_i$. To learn $\a$, we initialize a \emph{quantum learning model}---a parameterized quantum circuit (including measurement control) $\e_{\theta}$ and a measurement $\{M_k\}_{k\in \c}$. Mathematically, the circuit can be modelled as a quantum super-operator   $\e_{\theta}$ (see its definition in Eq.(\ref{Eq:evolution})),   and  $\theta$ is a set of free parameters that can be tuned.
Then for each $k\in \c$, we can compute the probability of the  measurement outcome being $k$:
\begin{equation}\label{Eq:statistics}
	f_k(\theta, \rho) = \tr(M_{k}^\dagger M_{k}\e_{\theta}(\rho)).
\end{equation}
It is worth noting that, as we mentioned before, measuring quantum state $\rho$ is probabilistic and $\rho$ will be changed after measuring. So, in practice, accurately estimating $f_k(\theta, \rho)$ for all $k\in\c$  requires enough many copies of $\rho$, which is not the same to the classical case, where a single copy of classical data often meets the training process. 

The quantum classification algorithm $\a$ outputs the class label $c$ for a quantum state $\rho$ using the following condition:
\begin{equation}
	\a(\theta,\rho)=\arg\max_k\tr(M_{k}^\dagger M_{k}\e_\theta(\rho)).
\end{equation}
The learning is carried out as $\theta$ is optimized to minimize the empirical risk
\begin{equation}\label{eq:learning}
  \min_{\theta}\frac{1}{N}\sum_{i=1}^N \l(f(\theta,\rho_i),c_i),
\end{equation}
where $\l$ refers to a predefined loss function, $f(\theta,\rho)$ is a probability vector with each $f_k(\theta, \rho), k\in\c$ as its element, and $c_i$ is also seen as a probability vector with the entry corresponding to $c_i$ being $1$ and others being $0$. The goal  is to find the optimized parameters $\theta^*$  minimizing the risk in Eq.(\ref{eq:learning}) for the given  dataset $T$. Mean-squared error (MSE) is the most popular instance of the empirical risk, i.e., the loss function $\l$ is squared error:
\[\l(f(\theta,\rho_i),c_i)=\frac{1}{C}\lVert f(\theta,\rho_i)-c_i\rVert_2^2,\]
where $C$ is the number of classes in $\c$, and  $\lVert\cdot\rVert_2$ is the $l_2$-norm.

As one can see in the above learning process, the main differences between classical and quantum machine learning algorithms are the learning models and data.

In this paper, we focus on the well-trained quantum classification algorithm $\a$, usually called a quantum classifier.  Here, $\a$ is said to be well-trained if training and   validation accuracy are both high ($\geq 95\%$). The training (validation) accuracy is the frequency that $\a$ successfully classifies the data in a training (validation) dataset. A validation dataset is mathematically equivalent to  a training dataset but only for testing $\a$ rather than learning $\a$.  In this context, $\theta^*$ is naturally omitted, i.e., $\a(\rho)=\a(\theta^*,\rho )$ and $\e(\rho)=\e_{\theta^*}(\rho)$. Briefly, $\a$ only consists of a super-operator $\e$ and a measurement $\{M_{k}\}_k$, denoted by $\a=(\e,\{M_{k}\}_k)$.

\subsection{An Illustrative Example}Let us further illustrate the above definitions by a concrete example---Quantum Convolutional Neural Networks (QCNNs)~\cite{cong2019quantum},  one of the most popular and successful quantum learning models. QCNN extends the main features and structures of the   Convolutional Neural Networks (CNNs) to quantum computing.   
\begin{figure}
\centering
    \includegraphics[width=0.8\linewidth]{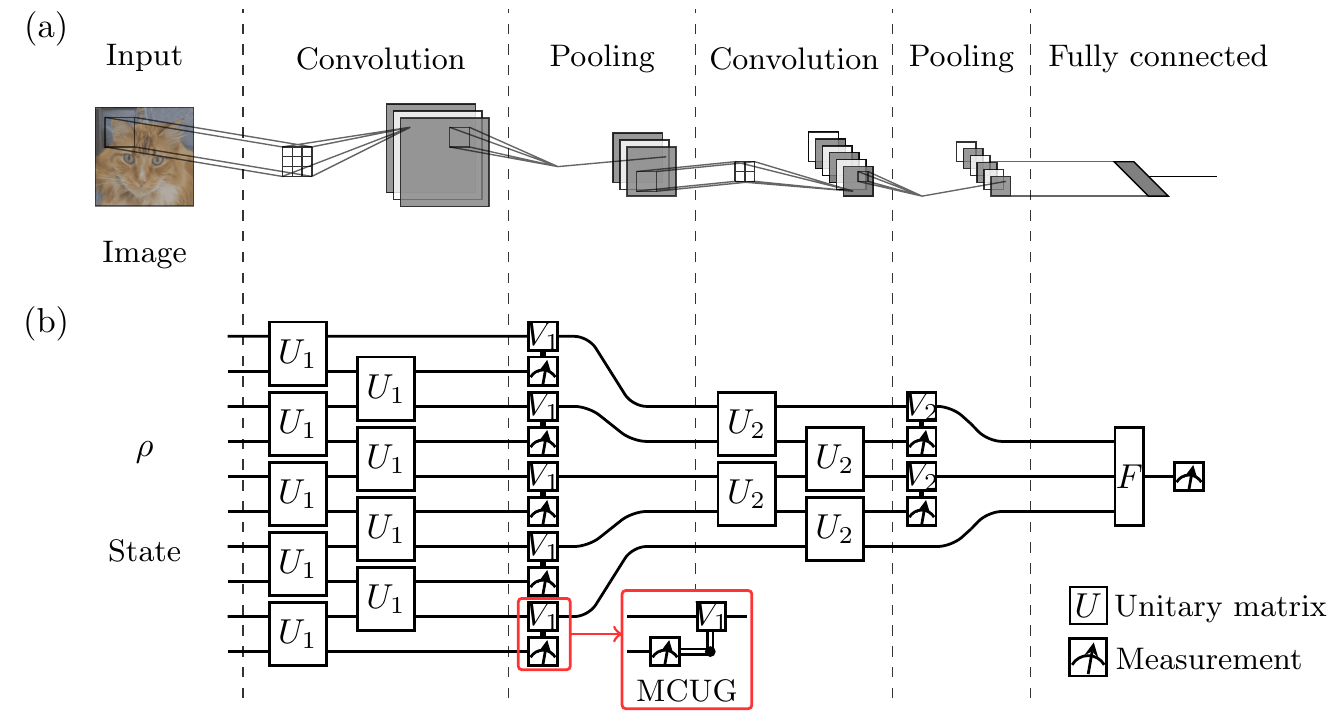}
    \caption{Simple example of CNN and QCNN. QCNN, like CNN, consists of a convolution layer that finds a new state and a pooling layer that reduces the size of the model. Here, MCUG stands for measurement control unitary gate, i.e.,  unitary matrix $V_1$ is applied on the circuit if and only if the measurement outcome is 1.}
    \label{fig.qcnn}
 \end{figure}
The model of QCNN applies the convolution layer and the pooling layer from CNNs to quantum systems, as shown in Figure~\ref{fig.qcnn}(b). The layout proceeds as follows:
\begin{itemize}
    \item [1] The convolution layer (circuit)  applies multiple qubit gates $U_i$ between adjacent qubits to find a new state;
    \item [2] The pooling layer reduces the size of the quantum system by  measuring a fraction of qubits, and the outcomes determine  unitary $V_i$ applied to nearby qubits;
    \item [3] Repeat the convolution layer and pooling layer defined in 1-2;
    \item [4] When the size of the system is sufficiently small, the fully connected layer is applied as a unitary matrix $F$ on the remaining qubits.
\end{itemize}
The input of QCNNs is an unknown quantum state $\rho_{in}$ and the output is obtained by measuring a fixed number of output qubits. As in the classical case, the learning model (defined as the number of convolution and pooling layers) is fixed, but the involved quantum gates (i.e. unitary matrices) $U_{i}, V_{j}, F$ themselves are learned by the above learning process. 

\begin{remark} 
Quantum machine learning can also be used to do classical machine learning tasks.  Image classification, for example, is one of the most successful applications of  Neural Networks (NNs). 
To explore the possible advantage of quantum computing, 
Quantum Neural Networks (QNNs)
have been used to  classify images in~\cite{farhi2018classification,oh2020tutorial}). It is shown that by encoding images to a quantum state $\rho_{in}$, QNNs can achieve high accuracy in image classification. We will present a quantum classifier for the classification of MNIST as an example in the evaluation section.\end{remark}


\section{Robustness}

An important issue in classical machine learning is: how robust is a  classification algorithm to adversarial perturbations. A similar issue exists for quantum classifiers against quantum noise. Intuitively, the robustness of quantum classifier  $\a$ is the ability to make correct classification with a small perturbation to the input states. 
Then a quantum state  $\sigma$ is considered as an adversarial example if it is similar to a benign state $\rho$,  but $\rho$ is correctly classified and $\sigma$ is classified into a class different from that of  $\rho$. Formally, 

\begin{definition}[Adversarial Example]\label{def-robust}
  Suppose we are given a quantum classifier $\a(\cdot)$, an input example $(\rho,c)$, a  distance metric $D(\cdot,\cdot)$ and a small enough threshold value $\epsilon>0$. Then $\sigma$ is said to be an $\epsilon$-adversarial example of $\rho$ if the following is true
  \[(\a(\rho)=c)\land (\a(\sigma)\not=c)\land (D(\rho,\sigma)\leq \epsilon).\]
\end{definition}

The leftmost condition $\a(\rho) = c$ asserts that $\rho$ is correctly classified, the middle condition $\a(\sigma)\not=c$ means that $\sigma$ is incorrectly classified, and the rightmost condition $D(\rho,\sigma)\leq \epsilon$ indicates that $\rho$ and $\sigma$ are similar (i.e., their distance is small). Sometimes, without any ambiguity, $\sigma$ is called an adversarial example of $\rho$ if $\epsilon$ is preset. Notably, by the above definition,  if $\a$ incorrectly classifies $\rho$, then we do not need to consider the corresponding adversarial examples. This is the correctness issue of quantum classifier $\a$ rather than the robustness issue. Hence, in the following discussions, we only consider the set of all correctly recognized states.

The absence of adversarial examples leads to robustness. 
\begin{definition}[Adversarial Robustness]\label{def:adversary} Let $\a$ be a quantum classifier. Then $\rho$ is $\epsilon$-robust for $\a$ if there is no adversarial example of $\rho$.
\end{definition}

The major problem concerning us in this paper is the following:
\begin{problem}[Robustness Verification Problem] Given a quantum classifier $\a(\cdot)$ and an input example $(\rho,c)$. Check whether or not $\a(\sigma)=c$ for all $\sigma\in \n_{\epsilon}(\rho)$,
where $\n_{\epsilon}(\rho)$ is the $\epsilon$-neighbourhood of $\rho$ as 
\[\n_{\epsilon}(\rho)=\{\sigma\in\dh: D(\rho,\sigma)\leq \epsilon)\}.\]
If not, then an adversarial example (counter-example) $\sigma\in\n_{\epsilon}(\rho)$ is provided.
\end{problem}

Obviously, if $\delta$ is a robust bound for an input example $(\rho,c)$ such that  $\a(\sigma)=c \textrm{ for any state } \sigma \in \n_{\delta}(\rho),$ then for any $\epsilon\leq \delta$ (i.e. $\n_{\epsilon}(\rho)\subseteq \n_{\delta}(\rho)$), there is no $\epsilon$-adversarial example of $\rho$.
It is a challenging problem to compute the optimal robust bound 
$\delta^*=\max\delta$ so that  there is no $\epsilon$-adversarial example if and only if $\epsilon\leq \delta^*$.

The above adversarial robustness of quantum states can be generalized to a notion of robustness for quantum classifiers: 
\begin{definition}[Robust Accuracy]\label{def:robust_ac} Let $\a$ be a quantum classifier. The $\epsilon$-robust accuracy of $\a$ is the proportion of  $\epsilon$-robust states in the training dataset.
\end{definition}

\begin{remark}Here, the robust accuracy is defined with respect to the training dataset. In some applications, the dataset can be chosen as another set of quantum states with correct classifications, such as a validation dataset or a combination of it with the training dataset.\end{remark}

The reader should notice that the above definitions of robustness for quantum classifiers are similar to those for classical classifiers. 
But an intrinsic distinctness between them comes from the choice of distance 
$D(\cdot,\cdot)$. 
In the classical case, humans play the role of the adversary, and then such a distance should promise that a small perturbation is imperceptible to humans, and vice versa. Otherwise, we cannot take the advantage of machine learning over human's distinguishability. For instance, in image recognition, the distance should reflect the perceptual similarity in the sense that humans would consider adversarial examples generated by it perceptually similar to benign image~\cite{sharif2018suitability}. 
In the quantum case, it is essential to choose a distance $D$ that is meaningful in quantum physics.
In this paper, we choose to use the distance:
\[D(\rho,\sigma)=1-F(\rho,\sigma)\]
defined by fidelity  $$F(\rho,\sigma)=[\tr(\sqrt{\sqrt{\rho}\sigma\sqrt{\rho}})]^2.$$ Here $\sqrt{\rho}=\sum_{k}\sqrt{\lambda_{k}}\ketbra{\psi_k}{\psi_k}$ if ${\rho}$ admits the spectral decomposition $\sum_{k}\lambda_{k}\ketbra{\psi_k}{\psi_k}.$
Fidelity is one of the most widely used quantities to quantify such uncertainty of noise
by the experimental quantum physics and quantum engineering communities (see e.g.~\cite{myerson2008high,burrell2010scalable}). 

\begin{remark}The  trace distance has been used in recent literature  (e.g.~\cite{du2020quantum}) for some issues related to quantum robustness verification:
\[T(\rho,\sigma)=\frac{1}{2}\|\rho-\sigma\|_{tr}=\frac{1}{2}\tr[\sqrt{(\rho-\sigma)^\dagger(\rho-
\sigma)}].\]
It is a generalization of the total variation distance, which is a distance measure for probability distributions.
So far, to the best of our knowledge, there is no discussion about which distance is better in the literature. Here, we argue that fidelity is better than trace distance in the context of quantum machine learning against quantum noise. As we know, state distinguishability is the basis of measuring the effect of noise on quantum computation. 
The main difference between trace distance $T(\rho,\sigma)$ and fidelity $F(\rho,\sigma)$ is the number of copies of states $\rho$ and $\sigma$ as the resource required in the experiments for distinguishing them. More precisely, 
trace distance quantifies the maximum probability of correctly guessing through a measurement whether $\rho$ or $\sigma$ was prepared, while fidelity asserts the same quantity whence infinitely many samples of $\rho$ and $\sigma$ can be supplied (See Appendix A of the extended version of this paper~\cite{guan2020robustness} for more details). In quantum machine learning, a large enough number of copies of the states are the precondition of statistics in Eq.(\ref{Eq:statistics}) for learning and classification. Thus, fidelity is more suitable than trace distance for our purpose.\end{remark}

\section{Robust Bound}
In this section, we develop a theoretic basis for robustness verification of quantum classifiers. After setting the distance $D$ to be the one defined by fidelity, a robust bound can be derived.
\begin{lemma}[Robust Bound]\label{lem:bound}
       Given a quantum classifier $\a =(\e,\{M_k\}_{k\in\c})$  and a quantum state $\rho$.
       Let $p_1$ and $p_2$ be  the first and second largest elements of $\{\tr(M_{k}^\dagger M_k\e(\rho))\}_k$, respectively.
       If $\sqrt{p_1}-\sqrt{p_2}>\sqrt{2\epsilon}$, then $\rho$ is $\epsilon$-robust. 
\end{lemma}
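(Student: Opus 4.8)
The plan is to show directly that the hypothesis forces $\a$ to agree with $\a(\rho)$ on the whole neighbourhood $\n_\epsilon(\rho)$, which immediately rules out any adversarial example and hence yields $\epsilon$-robustness. Write $k_1=\a(\rho)$ for the predicted class and fix an arbitrary $\sigma\in\n_\epsilon(\rho)$, so that $D(\rho,\sigma)=1-F(\rho,\sigma)\le\epsilon$, i.e. $F(\rho,\sigma)\ge 1-\epsilon$. Introduce the two nonnegative vectors $a=(a_k)_{k\in\c}$ and $b=(b_k)_{k\in\c}$ with $a_k=\sqrt{\tr(M_k^\dagger M_k\e(\rho))}$ and $b_k=\sqrt{\tr(M_k^\dagger M_k\e(\sigma))}$. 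Since $\e$ is trace preserving and $\{M_k^\dagger M_k\}_k$ sums to $I$, both $a$ and $b$ are unit vectors in the Euclidean norm, and the classification rule reads $\a(\rho)=\arg\max_k a_k^2$, $\a(\sigma)=\arg\max_k b_k^2$. By hypothesis $a_{k_1}=\sqrt{p_1}$ is the strictly largest coordinate of $a$ and the second largest equals $\sqrt{p_2}$. The goal then reduces to showing $b_{k_1}>b_j$ for every $j\ne k_1$, which gives $\a(\sigma)=k_1=\a(\rho)$.

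The first key step converts the fidelity bound into an inner-product bound between $a$ and $b$. Writing $f_k(\cdot)=\tr(M_k^\dagger M_k\e(\cdot))=\tr(\e^*(M_k^\dagger M_k)\,\functiondot\,)$, the family $\{\e^*(M_k^\dagger M_k)\}_k$ is again a POVM on $\h$, because the dual $\e^*$ is completely positive and unital ($\sum_k\e^*(M_k^\dagger M_k)=\e^*(I)=I$). Thus the distributions $(a_k^2)_k$ and $(b_k^2)_k$ are the outcome statistics of measuring $\rho$ and $\sigma$ with one fixed POVM. Invoking the standard characterisation of fidelity as the minimum over all measurements of the classical fidelity of the induced distributions, I obtain $\langle a,b\rangle^2=\big(\sum_k\sqrt{f_k(\rho)f_k(\sigma)}\big)^2\ge F(\rho,\sigma)\ge 1-\epsilon$, hence $\langle a,b\rangle\ge\sqrt{1-\epsilon}$. (Equivalently one may use monotonicity of fidelity under the channel $\e$ and then under measurement.)

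The main obstacle is to extract the \emph{tight} constant $\sqrt{2\epsilon}$: a crude estimate such as $\norm{a-b}=\sqrt{2-2\langle a,b\rangle}\le\sqrt{2\epsilon}$ followed by a coordinatewise triangle inequality only delivers the weaker gap condition $\sqrt{p_1}-\sqrt{p_2}>2\sqrt{\epsilon}$. To reach the sharp bound I would argue geometrically. Suppose, for contradiction, that $\sigma$ is labelled by some $j\ne k_1$, i.e. $b_j\ge b_{k_1}$. With the unit vector $e=\tfrac{1}{\sqrt2}(\mathbf{1}_{k_1}-\mathbf{1}_j)$, this constraint is exactly $\langle b,e\rangle\le 0$, whereas $\langle a,e\rangle=\tfrac{1}{\sqrt2}(a_{k_1}-a_j)>0$. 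Maximising the linear functional $\langle a,\functiondot\rangle$ over unit vectors $b$ in the half-space $\{\langle b,e\rangle\le 0\}$ is attained on the boundary hyperplane, giving $\langle a,b\rangle\le\norm{a-\langle a,e\rangle e}=\sqrt{1-\langle a,e\rangle^2}=\sqrt{1-\tfrac12(a_{k_1}-a_j)^2}$; moreover the maximiser $a-\langle a,e\rangle e$ has nonnegative entries (its $k_1$ and $j$ components both equal $\tfrac12(a_{k_1}+a_j)$), so the nonnegativity constraint on $b$ is inactive and does not weaken the estimate.

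Combining the two steps gives $\sqrt{1-\epsilon}\le\langle a,b\rangle\le\sqrt{1-\tfrac12(a_{k_1}-a_j)^2}$, i.e. $(a_{k_1}-a_j)^2\le 2\epsilon$. Since $a_j\le\sqrt{p_2}$ for every $j\ne k_1$, this forces $\sqrt{p_1}-\sqrt{p_2}\le a_{k_1}-a_j\le\sqrt{2\epsilon}$, contradicting the hypothesis $\sqrt{p_1}-\sqrt{p_2}>\sqrt{2\epsilon}$. Hence no $j\ne k_1$ can be the label of $\sigma$, so $\a(\sigma)=\a(\rho)$ for all $\sigma\in\n_\epsilon(\rho)$, and $\rho$ is $\epsilon$-robust. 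I expect the only points needing genuine care to be the justification of the POVM/fidelity inequality in the second paragraph and the verification that the extremal $b$ remains in the nonnegative orthant.
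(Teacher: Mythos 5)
Your proof is correct, and its first half coincides with the paper's: both pass to the unit vectors of square-rooted outcome probabilities and lower-bound their inner product by $\sqrt{1-\epsilon}$ via the measurement characterisation of fidelity---your route through the dual POVM $\{\e^\dagger(M_k^\dagger M_k)\}_k$ (unitality of the dual of a trace-preserving map) and the paper's route of first applying monotonicity under $\e$ and then measuring are interchangeable, as you yourself note. Where you genuinely diverge is the optimization step. The paper formulates $\max\ \vec{x}\cdot\vec{p}$ over nonnegative unit vectors subject to the disjunctive misclassification constraint $\prod_{j\geq 2}(x_1-x_j)=0$ and solves it \emph{exactly}: several reformulations, a perturbation argument (the $\vec{x}(\lambda)$ interpolation) showing the binding constraint can be taken to be $y_1=y_2$, and finally Lagrange multipliers giving the optimum $\sqrt{1-(\sqrt{p_1}-\sqrt{p_2})^2/2}$. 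You sidestep all of this by fixing one wrong class $j$, encoding $b_j\geq b_{k_1}$ as the halfspace $\langle b,e\rangle\leq 0$ with $e=\tfrac{1}{\sqrt{2}}(\mathbf{1}_{k_1}-\mathbf{1}_j)$, and invoking the one-line Pythagorean bound $\langle a,b\rangle\leq\norm{a-\langle a,e\rangle e}=\sqrt{1-\tfrac{1}{2}(a_{k_1}-a_j)^2}$; deferring the comparison $a_j\leq\sqrt{p_2}$ to the very end is precisely what spares you the paper's ordering/reduction argument. One small economy you could claim: your check that the extremal $b$ has nonnegative entries is not needed for the direction you use---dropping the nonnegativity constraint only enlarges the feasible set and so only weakens the upper bound, which already suffices for the contradiction; the check matters only for seeing that $\sqrt{2\epsilon}$ is the sharp constant of this relaxation, which is also what the paper's explicit solution buys (it exhibits the extremal distribution). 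In short: same skeleton and same key fidelity lemma, but your halfspace-projection argument is a shorter and more transparent substitute for the paper's constrained-optimization computation.
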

\begin{proof}
See  Appendix~B of the extended version of this paper~\cite{guan2020robustness}.
\end{proof}

 The above robust bound gives us a quick robustness verification by the measurement outcomes of $\rho$ without searching any possible adversarial examples. Furthermore, it also can be used to compute an under-approximation of the robust accuracy of $\a$ by one-by-one checking the robustness of quantum states in the training dataset. We will see that the robust bound and the induced robust accuracy scales well in the later experiments.   
 However, $\sqrt{p_1}-\sqrt{p_2}>\sqrt{2\epsilon}$ is not a necessary condition of $\epsilon$-robustness. Fortunately, when $\sqrt{p_1}-\sqrt{p_2}\leq \sqrt{2\epsilon}$, we can compute the optimal robust bound by Semidefinite Programming (SDP). 
 Recall that 
SDP is a convex optimization concerned with the optimization of a linear objective function over the intersection of the cone of positive semidefinite matrices with an affine space. It has the form
\begin{align*}
    \min \quad & \tr(C X) \\
    \text{subject to} \quad & \tr(A_k X) \leq b_k, \quad \text{for } k = 1, \ldots, m \\
    & X \geq 0
\end{align*}
where $C, A_1, \ldots, A_m$ are all Hermitian  $n\times n$ matrices (i.e. $A^\dagger=A$), and $X$ is the optimization variable $n\times n$ matrix with $X\geq 0$, i.e.,  $X$ is positive semidefinite. Many efficient solvers have been developed for solving SDPs---not only compute the minimal value, but also output a corresponding optimal solution $X$. The following two theorems show that that checking $\epsilon$-robustness   and computing optimal robust bound of quantum states can both be reduced to an SDP.
\begin{theorem}[$\epsilon$-robustness Verification]\label{Theo:checkrobust}
Let $\a=(\e,\{M_k\}_{k\in\c})$ be a quantum classifier and $\rho$ be a state with $\a(\rho)=l$. Then $\rho$ is $\epsilon$-robust if and only if for all $k\in\c$ and $k\not=l$, the following problem has no solution (feasibility problem):
\begin{align*}
\min_{\sigma\in \dh}\quad&0\\
\text{subject to}  \quad  &\sigma \geq 0  \\
    &\tr(\sigma)=1 \\
    & \tr[(M_{l}^\dagger  M_l-M_{k}^{\dagger}M_k)\e(\sigma)]\leq 0\\
   & 1-F(\rho,\sigma)\leq \epsilon 
\end{align*}
\end{theorem}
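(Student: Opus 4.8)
The plan is to establish the biconditional by unfolding the definition of $\epsilon$-robustness and translating, constraint by constraint, the existence of an adversarial example of $\rho$ into feasibility of one of the listed programs. Write $f_k(\sigma)=\tr(M_k^\dagger M_k\e(\sigma))$ for the outcome probabilities (Eq.(\ref{Eq:statistics}) with $\theta$ omitted for the trained classifier), so that $\a(\sigma)=\arg\max_k f_k(\sigma)$. By Definition~\ref{def:adversary}, $\rho$ fails to be $\epsilon$-robust precisely when there exists some $\sigma\in\n_\epsilon(\rho)$ with $\a(\sigma)\neq l$, where $l=\a(\rho)$. The pivot of the whole argument is the elementary reformulation that $\a(\sigma)\neq l$ holds if and only if some competing class scores at least as high as $l$, i.e. $f_k(\sigma)\geq f_l(\sigma)$ for some $k\neq l$; indexing the programs by this witnessing class $k$ is exactly why the theorem quantifies over all $k\neq l$.

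I would then verify that, for a fixed $k\neq l$, the four constraints encode exactly the two conditions ``$\sigma\in\n_\epsilon(\rho)$'' and ``$f_k(\sigma)\geq f_l(\sigma)$''. The constraints $\sigma\geq 0$ and $\tr(\sigma)=1$ assert precisely that $\sigma\in\dh$; the constraint $1-F(\rho,\sigma)\leq\epsilon$ is verbatim $D(\rho,\sigma)\leq\epsilon$, i.e. $\sigma\in\n_\epsilon(\rho)$; and, using linearity of the super-operator $\e$ and of the trace, $\tr[(M_l^\dagger M_l-M_k^\dagger M_k)\e(\sigma)]\leq 0$ rearranges to $f_l(\sigma)\leq f_k(\sigma)$. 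Hence the $k$-th program is feasible if and only if there is a state in the $\epsilon$-neighbourhood of $\rho$ at which class $k$ is weakly preferred to $l$.

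The biconditional now follows from a two-sided argument built on this dictionary. For the direction ``not all infeasible $\Rightarrow$ not robust'', suppose the $k$-th program admits a feasible $\sigma$; then $\sigma\in\n_\epsilon(\rho)$ and $f_k(\sigma)\geq f_l(\sigma)$, so $l$ is no longer the strict winner, $\a(\sigma)\neq l$, and $\sigma$ is an adversarial example, witnessing non-robustness. Conversely, if every program is infeasible, then any $\sigma\in\n_\epsilon(\rho)$ already meets the density and fidelity constraints, so for each $k\neq l$ the remaining constraint must fail, giving $f_l(\sigma)>f_k(\sigma)$ for all $k\neq l$; thus $l$ is the unique maximiser, $\a(\sigma)=l$, and no adversarial example exists, so $\rho$ is $\epsilon$-robust.

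The single point that genuinely needs care---and which I expect to be the main obstacle---is the treatment of ties on the decision boundary, where $f_k(\sigma)=f_l(\sigma)$. Because the program uses the non-strict inequality $\leq 0$, such boundary states are declared feasible, so the proof must adopt the (conservative and physically natural) convention that $\a(\sigma)=l$ requires $l$ to be the strict unique maximiser; under this convention a state lying exactly on the boundary is itself a witness to non-robustness, which is what keeps the two directions above mutually consistent. Finally, to justify calling each feasibility test a genuine SDP, I would note that its only non-affine ingredient, $1-F(\rho,\sigma)\leq\epsilon$, is semidefinite-representable: the fidelity admits a standard SDP characterisation that is linear in $\sigma$ after introducing an auxiliary block-matrix variable, and substituting it recasts the program into the canonical SDP form displayed just before the theorem.
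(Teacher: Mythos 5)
Your proof is correct and follows essentially the same route as the paper's own argument: unfold Definition~\ref{def:adversary}, read off the constraint-by-constraint dictionary between feasibility of the $k$-th program and the existence of an adversarial example in $\n_\epsilon(\rho)$, and appeal to the semidefinite representability of fidelity for the SDP reduction. If anything, you are more careful than the paper's terse proof, which asserts that a feasible $\sigma$ satisfies $\a(\sigma)=k$ (in fact only $\a(\sigma)\neq l$ follows, since a third class may dominate both) and leaves the tie-breaking convention at $\tr[M_k^\dagger M_k\e(\sigma)]=\tr[M_l^\dagger M_l\e(\sigma)]$ implicit, whereas you make both points explicit.
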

\begin{proof}
See Appendix~C of the extended version of this paper~\cite{guan2020robustness}.
\end{proof}

Actually, the objective function $0$ in the above theorem can be chosen as any constant number.
\begin{theorem}
[Optimal Robust Bound]\label{Theo:SDP}
Let $\a$ and $\rho$ be as in Theorem~\ref{Theo:checkrobust} with $\a(\rho)=l$, and let $\delta_k$ be the solution of the following problem:
\begin{align*}
\delta_k=\min_{\sigma\in \dh}\quad& 1-F(\rho,\sigma)\\
\text{subject to} \quad    &\sigma \geq 0  \\
    &\tr(\sigma)=1 \\
    & \tr[(M_{l}^\dagger  M_l-M_{k}^{\dagger}M_k)\e(\sigma)]\leq 0
\end{align*}
where if the problem is  unsolved, then  $\delta_k=+\infty.$
Then $\delta=\min_{k\not=l}\delta_k$  is the optimal robust bound of $\rho$.
\end{theorem}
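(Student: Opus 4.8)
The plan is to reduce the Optimal Robust Bound statement to the feasibility characterization already established in Theorem~\ref{Theo:checkrobust}, and to argue that the SDP value $\delta_k$ correctly measures, for each wrong class $k \neq l$, the smallest fidelity-distance at which $\rho$ can be misclassified toward $k$. The key observation is that $\sigma$ is a potential adversarial example that confuses $l$ with $k$ exactly when it is a legal state ($\sigma \geq 0$, $\tr(\sigma)=1$) satisfying $\tr[(M_l^\dagger M_l - M_k^\dagger M_k)\e(\sigma)] \leq 0$, i.e. the measurement outcome $k$ becomes at least as likely as the true outcome $l$. The quantity $1 - F(\rho,\sigma)$ is precisely the distance $D(\rho,\sigma)$ we use, so minimizing it over this feasible set returns the infimal perturbation radius beyond which misclassification into $k$ can occur.

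First I would establish the direction ``$\delta = \min_{k\neq l}\delta_k$ is a robust bound,'' i.e. that $\rho$ is $\epsilon$-robust for every $\epsilon < \delta$. Fix such an $\epsilon$. By Theorem~\ref{Theo:checkrobust}, $\rho$ fails to be $\epsilon$-robust iff for some $k \neq l$ the feasibility problem there has a solution, namely a state $\sigma$ with $\tr[(M_l^\dagger M_l - M_k^\dagger M_k)\e(\sigma)] \leq 0$ and $1 - F(\rho,\sigma) \leq \epsilon$. But any such $\sigma$ is feasible for the $\delta_k$-program, so it would force $\delta_k \leq 1 - F(\rho,\sigma) \leq \epsilon < \delta \leq \delta_k$, a contradiction. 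Hence no such $\sigma$ exists for any $k$, and $\rho$ is $\epsilon$-robust.

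Next I would prove optimality, namely that $\rho$ is not $\delta$-robust (so the bound cannot be enlarged). Let $k^*$ attain the minimum, $\delta = \delta_{k^*}$, and let $\sigma^*$ be the optimal solution returned by the SDP solver for class $k^*$. Then $\sigma^*$ satisfies the three constraints and achieves $1 - F(\rho,\sigma^*) = \delta$. I would check that $\a(\sigma^*) \neq l$: the feasibility constraint gives $\tr(M_{k^*}^\dagger M_{k^*}\e(\sigma^*)) \geq \tr(M_l^\dagger M_l \e(\sigma^*))$, so outcome $l$ is not the unique arg-max, and thus $\sigma^*$ witnesses misclassification at distance exactly $\delta$. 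Therefore $\sigma^*$ is a $\delta$-adversarial example (on the boundary), confirming $\delta^* = \delta$. The cases where some or all $\delta_k = +\infty$ are handled by the convention that an infeasible program contributes nothing restrictive: if every $\delta_k = +\infty$ then $\rho$ is robust for all $\epsilon$ and $\delta = +\infty$ correctly reflects unbounded robustness.

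The main obstacle I anticipate is the boundary behavior of the infimum. The SDP minimizes a continuous objective over a compact feasible set (the set of density operators intersected with a closed half-space is compact, and $1 - F(\rho,\cdot)$ is continuous), so the minimum $\delta_k$ is attained and the optimal $\sigma^*$ genuinely lies in the feasible region; this is what lets me pass freely between ``$\epsilon < \delta$ implies robust'' and ``at $\epsilon = \delta$ an adversarial example exists.'' The one subtlety worth care is whether $1 - F(\rho,\sigma)$, or rather a monotone reformulation of fidelity, is actually affine or concave in $\sigma$ so that the program is a bona fide SDP; I would invoke the standard semidefinite representation of fidelity (the Watrous/block-matrix formulation) to justify that the $1 - F(\rho,\sigma) \leq \epsilon$ constraint and the objective are SDP-expressible, thereby ensuring the solver indeed returns both the optimal value and a feasible optimizer $\sigma^*$.
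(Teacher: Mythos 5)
Your proof is correct and takes essentially the same route as the paper, whose entire proof of this theorem is the one-line remark that it is ``similar to Theorem~\ref{Theo:checkrobust}'': like you, the paper reduces everything to the feasibility characterization of $\epsilon$-robustness in Theorem~\ref{Theo:checkrobust} (a feasible $\sigma$ for class $k\neq l$ is exactly an adversarial example at distance $1-F(\rho,\sigma)$) together with the SDP-representability of fidelity via Watrous's formulation and the convexity of $\dh$. Your additional care about attainment of the minimum (compactness of the feasible set) and the boundary case $\epsilon=\delta$ only makes explicit details the paper leaves implicit.
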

\begin{proof}
 The proof is similar to Theorem~\ref{Theo:checkrobust}.
\end{proof}

\begin{remark}
One may wonder why checking $\epsilon$-robustness and computing the optimal robust bound can always be reduced to an SDP. This is indeed implied by the \textit{basic quantum mechanics postulate} of linearity; more specifically, all of the super-operators and measurements used in quantum machine learning algorithms are linear. In contrast, the functions represented by the neural networks in classical machine learning may be nonlinear as the pooling layer is not linear. As a result, the reduced optimization problem for the robustness verification is not convex (e.g. \cite{ruan2018global}).  For overcoming this difficulty, many different methods have been developed to encode the nonlinear activation functions as linear constraints. Examples include NSVerify~\cite{lomuscio2017approach}, MIPVerify~\cite{tjeng2017evaluating}, ILP~\cite{bastani2016measuring} and ImageStar~\cite{tran2020verification}.\end{remark}

\section{Robustness Verification Algorithms}

\begin{algorithm}[htbp]
\caption{StateRobustnessVerifier($\a,\epsilon,\rho,l$)}
\label{Algorithm}
    \begin{algorithmic}[1]
    \Require $\a=(\e,\{M_{k}\}_{k\in\c})$ is a well-trained quantum classifier, $\epsilon < 1$ is a real number, $(\rho,l)$ is an element of the training dataset of $\a$
    \Ensure \True{} indicates $\rho$ is $\epsilon$-robust or \False{} with an adversarial example $\sigma$ indicates $\rho$ is not $\epsilon$-robust  
    \ForAll{$k\in\c$ and $k\not=l$}\label{Algorithm:line:count}
    \State\label{Algorithm:line:SDP} By a SDP solver, compute $\delta_k$ with an optimal state $\sigma_k$ in the SDP of Theorem~\ref{Theo:SDP}
    \EndFor
    \State Let $\delta=\min_k\delta_k$ and $k^*=\arg\min_k\delta_k$
    \If{$\delta>\epsilon$}
    \State \Return \True{}
    \Else
    \State \Return \False{} and $\sigma_{k*}$
    \EndIf
    \end{algorithmic}  
\end{algorithm}

In this section, we develop several algorithms for verifying the robustness of quantum classifiers based on the theoretic results presented in the last section. 

First, let us consider the robustness of a given quantum state $\rho$. In many applications (as shown in our experiments in Section \ref{evaluation}), we are required to check whether  $\rho$ is $\epsilon$-robust for an arbitrarily given threshold $\epsilon$.
Note that 
once we computed  the optimal robust bound $\delta$, checking $\epsilon$-robustness of $\rho$ is equivalent to compare $\epsilon$ and $\delta$; that is, $\epsilon\leq \delta$ if and only if $\rho$ is $\epsilon$-robust. Combining this simple observation with Theorem \ref{Theo:checkrobust}, we obtain 
Algorithm~\ref{Algorithm} for checking the $\epsilon$-robustness of $\rho$ and finding the minimum adversarial perturbation $\delta$ caused by quantum noise.   The main cost of Algorithm~\ref{Algorithm} incurs in solving SDPs in Line~\ref{Algorithm:line:SDP}, which scales as $O(n^{6.5})$ by interior point methods~\cite{zhang2018sparse}, where $n$ is the number of rows of the semidefinite matrix $\rho$ in SDP, i.e., the dimension of Hilbert space of the quantum states in our case. As we need to apply an SDP solver for $|\c|-1$ times in Line~\ref{Algorithm:line:count}, the total complexity is as follows.
\begin{theorem}
The worst case complexity of Algorithm~\ref{Algorithm} is $O(|\c|\cdot n^{6.5})$, where $n$ is the dimension of input state $\rho$ and $|\c|$ is the number of the set $\c$ of classes we are interested in.
\end{theorem}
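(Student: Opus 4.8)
The plan is to establish the complexity bound by a straightforward accounting of the work done in Algorithm~\ref{Algorithm}, combining the per-iteration cost of the SDP solver with the number of iterations in the loop. First I would isolate the dominant cost: every operation in the algorithm outside of Line~\ref{Algorithm:line:SDP} (the comparison $\delta>\epsilon$, taking a minimum over at most $|\c|$ real numbers, and the final $\arg\min$) is $O(|\c|)$ or cheaper, and is therefore negligible compared to the SDP solves. So the entire complexity is governed by how many SDPs we solve and how expensive each one is.

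Next I would count the SDP solves. The loop at Line~\ref{Algorithm:line:count} ranges over all $k\in\c$ with $k\neq l$, which is exactly $|\c|-1$ iterations, and each iteration invokes the SDP solver once to compute $\delta_k$ together with an optimal state $\sigma_k$, as justified by Theorem~\ref{Theo:SDP}. I would then invoke the stated per-solve cost: by interior point methods~\cite{zhang2018sparse}, solving a single instance of the SDP in Theorem~\ref{Theo:SDP} costs $O(n^{6.5})$, where $n$ is the number of rows of the optimization variable $\sigma$, i.e.\ the dimension of the Hilbert space $\h$ in which the input state $\rho$ lives. This is the key quantitative input, and I would treat it as given rather than re-derive the interior-point analysis.

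Finally I would multiply: $(|\c|-1)$ solves, each at $O(n^{6.5})$, yields a total of $O((|\c|-1)\cdot n^{6.5}) = O(|\c|\cdot n^{6.5})$, which absorbs the negligible $O(|\c|)$ bookkeeping overhead and matches the claimed worst-case bound. The phrase \emph{worst case} is accounted for by the fact that every SDP must actually be solved: there is no early termination in the loop, so the $|\c|-1$ solves always occur regardless of whether $\rho$ turns out to be robust.

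I do not anticipate a genuine mathematical obstacle here, since the statement is essentially a cost-aggregation over the algorithm's structure. The only point requiring a small amount of care is making explicit that $n$ is the dimension of the \emph{input} Hilbert space and confirming that the SDP variable $\sigma$ in Theorem~\ref{Theo:SDP} indeed has this same dimension (so that the $O(n^{6.5})$ figure applies verbatim); one should also note that the super-operator $\e$ and the measurement operators $M_k$ are fixed data of the classifier and do not inflate the asymptotic cost. With those identifications in place, the bound follows immediately.
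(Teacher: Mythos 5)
Your proposal is correct and follows essentially the same argument as the paper, which likewise bounds the cost by the $|\c|-1$ SDP solves in the loop at Line~\ref{Algorithm:line:count}, each costing $O(n^{6.5})$ by interior point methods~\cite{zhang2018sparse} with $n$ the dimension of the Hilbert space (the number of rows of the SDP variable). Your added remarks---that the bookkeeping outside the solver is negligible, that the variable $\sigma$ lives in the same $n$-dimensional space as $\rho$, and that the loop has no early exit---are consistent with, and slightly more explicit than, the paper's discussion.
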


Now we turn to consider the robustness of a quantum classifier $\a$. Algorithm~\ref{algo:robustcheck} is designed for checking robustness of $\a$ by combining Algorithm~\ref{Algorithm} with Lemma \ref{lem:bound} (see the discussion in the paragraph  after Lemma \ref{lem:bound}). 
\begin{algorithm}[htbp]
\caption{RobustnessVerifier($\a,\epsilon,T$)}
\label{algo:robustcheck}
    \begin{algorithmic}[1]
    \Require  $\a=(\e,\{M_k\}_{k\in\c})$ is a well-trained quantum classifier, $\epsilon < 1$ is a real number, $T=\{(\rho_i,l_{i})\}$ is the training dataset of $\a$
    \Ensure The  robust accuracy $RA$ and a set $R=\{<\sigma_j, i_{j}>\}$, where for each $j$, $\rho_j$ is an $\epsilon$-adversarial example of $\rho_{i_j}$; $R$ can be an empty set if all states in $T$ are $\epsilon$-robust.
    \State $R=\emptyset$ be an empty set. \Comment{Recording adversarial examples and corresponding indexes of states in training dataset $T$}
    \ForAll{$(\rho_i,l_i)\in T$}
    \State Let $p_1$ and $p_2$ be the first and second largest elements of $\{\tr(M_{k}^\dagger M_k\e(\rho_i))\}_k$, respectively.
    \If {$\sqrt{p_1}-\sqrt{p_2}\leq\sqrt{2\epsilon}$} \Comment{Applying the robust bound in Lemma~\ref{lem:bound}}
    \If{StateRobustnessVerifier $(\a,\epsilon,\rho_i,l_i) ==$ \False{}} 
    \State $\sigma$ be the output state of StateRobustnessVerifier $(\a,\epsilon,\rho_i,l_i)$ 
    \State $R=R\cup \{(\sigma, i)\}$
    \EndIf
    \EndIf
    \EndFor
    \State \Return $RA=1-\frac{|R|}{|T|}$, $R$ \Comment{$|R|=0$ if $R$ is a empty set}
    \end{algorithmic}
\end{algorithm}
A major benefit of formal robustness verification for  classical classifiers is perhaps that it can be used to detect a counter-example (adversarial example) for a given input~(see e.g.~\cite{tran2020verification,elboher2020abstraction,fremont2020formal,kwiatkowska2019safety}). This benefit is kept in Algorithm~\ref{algo:robustcheck} for the robustness verification of quantum classifiers. In particular, we are able to extend the technique of adversarial training in classical machine learning~\cite{madry2017towards} into the quantum case:  an adversarial example $\sigma$ is automatically generated once $\epsilon$-robustness of $\rho$ fails, and then by appending $(\sigma, l)$ into the training dataset, we can retrain $\a$ to improve the robustness of the classifier.

To analyze the complexity of 
Algorithm~\ref{algo:robustcheck}, we first see by Theorem~\ref{Theo:SDP} that for evaluating the robustness of $\a$---computing its robust accuracy and finding its adversarial examples, one need to call  Algorithm~\ref{Algorithm} for each quantum state in the training dataset, which costs $ O(|\c|\cdot n^{6.5})$. Thus, the total complexity of robustness verification is $O(|T|\cdot |\c|\cdot n^{6.5})$, where $|T|$ is the number of elements in the training dataset $T$. However, the robust bound given in Lemma~\ref{lem:bound} can help to speed up the process by quickly finding all potential non-robust states,  as the complexity of finding the bound is only $O(|\c|\cdot n^3)$, which is the cost of $|\c|$ times of the multiplication of two $n\times n$ matrices. In practice, this bound scales well, as confirmed by our experiments presented in Section \ref{evaluation}. Therefore, a good strategy for implementing the robustness verification is that we first use robust bound to pick up all potential non-robust states from the given training dataset $T$ and store them in a set $T'$. Then we check all left candidates in the training dataset $T$ one-by-one using  Algorithm~\ref{Algorithm} and use a set $R$ to record the found adversarial examples and the corresponding indexes of states. This strategy  can significantly reduce the complexity to $O(|T'|\cdot |\c| \cdot n^{6.5})$.
Indeed, our experiments show that the robust bound given in Lemma~\ref{lem:bound} scales very well in the sense of  $|T'|\ll |T|$.

\begin{remark}
 Thanks to the linearity of the quantum learning model determined by the basic postulate of quantum mechanics,
 the robustness verification of quantum classifiers
 can be done in an efficient way  (with polynomial time complexity in the size of input state). It is usually not the case in verifying the robustness of classical machine learning algorithms.
For example, DNNs are often non-linear and non-convex and verifying even some simple properties of them can be an  NP-complete problem~\cite{katz2017reluplex}. 
 
 Surprisingly, the robustness verification problem for quantum classifiers becomes much harder if we are required to find adversarial examples in \textit{pure states}. Roughly speaking, the reason is that the set of all pure states is not convex, and thus computing the optimal robust bound for pure states is not an SDP, as in Theorem~\ref{Theo:SDP}. We can prove that it is a  Quadratically Constrained Quadratic Program (QCQP), an optimization problem where both the objective function and the constraints are quadratic functions (see Appendix~D of the extended version of this paper~\cite{guan2020robustness} for the proof), which is NP-hard.  Algorithm~\ref{Algorithm} can be adapted to this pure state robustness verification by calling a QCQP solver instead of an SDP solver in Line~\ref{Algorithm:line:SDP}. Subsequently,  Algorithm~\ref{algo:robustcheck} can use this new version of Algorithm~\ref{Algorithm} as a subroutine to compute the corresponding robust accuracy and find adversarial examples of pure states. We will evaluate the QCQP-based robustness verification in the case study of MNIST classification in which handwritten digits are encoded in pure states. 

\end{remark}

\section{Evaluation}\label{evaluation}
Algorithm~\ref{algo:robustcheck} is implemented on \textit{TensorFlow Quantum} --- a platform of Google for designing and training quantum machine learning algorithms, by calling an SDP solver --- CVXPY: Python Software for Disciplined Convex Programming~\cite{diamond2016cvxpy}. This section aims to evaluate our approach with experiments on some concrete examples. This section is arranged as follows. In Subsections \ref{sec:7-1}-\ref{sec:7-4}, we present several    well-trained quantum classifiers. Then the evaluation is carried out in Subsection \ref{sec:7-5} by applying Algorithm~\ref{algo:robustcheck} to check the robustness verification of these classifiers and find their adversarial examples if existing. 

To demonstrate our method as sufficiently as possible, we check the robustness of four quantum classifiers. We begin with a “Hello World” example --- qubits classification, and then we step in two quantum classifiers applied to real world tasks --- quantum phase recognition and cluster excitation detection, which are both fundamental and hard problems in quantum physics. At last, to compare with classical robustness verification, we consider the classification of MNIST by encoding handwritten digital images into quantum data. \textit{These experiments cover all illustrated examples of TensorFlow Quantum}.  

\subsection{Quantum Bits Classification}\label{sec:7-1}
A \textquotedblleft Hello World" example of quantum machine learning is quantum bits classification~\cite{broughton2020tensorflow}. 
The aim is to implement a binary classification for regions on a single qubit, i.e., a perceptron for qubits. Specifically, two random normalized vectors $\ket{a}$ and $\ket{b}$ (pure states) in the $X$-$Z$ plane of the Bloch sphere are chosen. Around these two vectors, we randomly sample two sets of quantum data points; the objective is to learn a quantum gate to distinguish the two sets. 
\begin{figure}
\centering
    \includegraphics[width=0.9\linewidth]{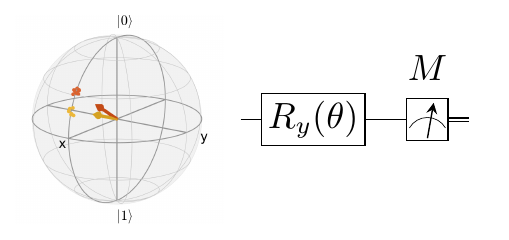}
    \caption{Training model of quantum bits classification: the left figure shows the samples of quantum training dataset represented on the Bloch sphere. Samples are divided into two categories, marked by red and yellow, respectively. The vectors are the states around which the samples were taken. The first part of the right figure is a parameterized rotation gate, whose job is to remove the super-positions in the quantum data. The second part is a measurement $M$ along the Z-axis of the Bloch sphere converting the quantum data into classes.}
    \label{fig.binary.model1}
 \end{figure}
 A concrete instance of this type is shown in Fig. \ref{fig.binary.model1}. 
In this example, the angles with $\ket{0}$ (Z-axis) of the two states $\ket{a}$ and $\ket{b}$ are $\theta_a=1$ and $\theta_b=1.23$, respectively; see the first figure in  Fig.~\ref{fig.binary.model1}. Around these two vectors, we randomly sample two sets (one for category \textquotedblleft a\textquotedblright\  and one for category \textquotedblleft b\textquotedblright) of quantum data points on the sphere, forming a dataset. The dataset consists of $800$ samples for the training and $200$ samples for the validation. 
As shown in Fig.~\ref{fig.binary.model1}, we use a parameterized rotation gate $R_y(\theta) = e^{-i\sigma_y\theta/2}$ and a measurement  $M = \{M_a = \ket{0}\bra{0}, M_b = \ket{1}\bra{1}\}$ to do the  classification. Targeting to minimizing the MSE form of Eq.~(\ref{eq:learning}),
we use Adam optimizer~\cite{KingmaB14} to update $\theta$.  After training, we achieve both $100\%$ training and validation accuracy, and the final parameter $\theta$ is $0.4835$.


\subsection{Quantum Phase Recognition}

Quantum phase recognition (QPR) of one dimensional many-body systems has been attacked by quantum convolutional neural networks (QCNNs) proposed by Cong et al.~\cite{cong2019quantum}. Consider a $Z_2\times Z_2$ symmetry-protected topological (SPT) phase $\mathcal{P}$ and the ground states of a family of Hamiltonians on spin-$1/2$ chain with open boundary conditions:
\[ H = -J\sum_{i=1}^{N-2}Z_iX_{i+1}Z_{i+2} - h_1\sum_{i=1}^N X_i - h_2\sum_{i=1}^{N-1} X_i X_{i+1}\]
where $X_i,Z_i$ are Pauli matrices~\cite{nielsen2010quantum} for the spin at site $i$, and the $Z_2\times Z_2$ symmetry is generated by $X_{\textrm{even}(\textrm{odd})} = \prod_{i\in\textrm{even}(\textrm{odd})}X_i.$ The goal is to identify whether the ground state $\ket{\psi}$ of $H$ belongs to phase $\mathcal{P}$ when $H$ is regarded as a function of $(h_1/J, h_2/J)$. For small $N$,  a numerical simulation can be used to exactly solve this problem~\cite{cong2019quantum}; See Fig.~4a in Appendix~E of the extended version of this paper~\cite{guan2020robustness} for the exact phase boundary points (blue and red diamonds) between SPT phase and non-SPT (paramagnetic or antiferromagnetic) phase  for $N=6$. Thus the $6$-qubit instance is an excellent testbed for different new methods and techniques of QPR. Here, we train a QCNN model to implement $6$-qubit QPR in this setting.

To generate the dataset for  training, we sample a serials of Hamiltonian $H$ with $h_2/J=0$,   uniformly varying $h_1/J$ from $0$ to $1.2$ and compute their corresponding ground states; see the gray line of Fig.~4a in Appendix~E of the extended version of this paper~\cite{guan2020robustness}. For the testing, we uniformly sample a set of validation data from two random regions of the 2-dimensional space $(h_1/J, h_2/J)$; see the two dashed rectangles of Fig.~4a. Finally, we obtain $1000$ training data and $400$ validation data. Our parameterized QCNN circuit is shown in Fig.~4b in Appendix~E of the extended version of this paper~\cite{guan2020robustness}, and the unitaries $U_i, V_j, F$ are parameterized with generalized Gell-Mann matrix basis~\cite{Bertlmann_2008}: {$U = \exp(-i\sum_j \theta_j\Lambda_j)$}, where $\Lambda_j$ is a matrix and $\theta_j$ is a real number; the total number of parameters $\theta_j$, $\Lambda_j$ is 114.  For the outcome measurement of one qubit, we use measurement   $M = \{M_0 = \ketbra{+}{+}, M_1 = \ketbra{-}{-}\}$ to predict that input states belongs to $\mathcal{P}$ with output $0$, where $\ket{\pm}=\frac{1}{\sqrt{2}}(\ket{0}\pm\ket{1})$. Targeting to minimizing the MSE form of Eq.~(\ref{eq:learning}),
we use Adam optimizer to update the $114$ parameters.
After training, $97.7\%$ training accuracy and  $95.25\%$ validation accuracy are obtained. At the same time, our classifier conducts a phase diagram (the colorful figure in  Fig.~4a), where the learned phase boundary almost perfectly matches the exact one gotten by the numerical simulation. All these results indicate that our classifier is well-trained.

\subsection{Cluster Excitation Detection}
The task of cluster excitation detection is to train a quantum classifier to detect if a prepared cluster state is ``excited" or not~\cite{broughton2020tensorflow}. Excitations are represented with a $X$ rotation on one qubit. A large enough rotation is deemed to be an excited state and is labeled by $0$, while a rotation that isn't large enough is labeled by $1$ and is not deemed to be an excited state. Here, we demonstrate this classification task with $6$ qubits. We use the circuit shown in Fig.~5a of Appendix~E in the extended version of this paper~\cite{guan2020robustness} to generate training ($840$) and validation ($360$) samples. The circuit  generates cluster state by performing a $X$ rotation (we omit angle $\theta$) on one quibit. The rotation angle $\theta$ is ranging from $-\pi$ to $\pi$ and if $-\pi/2 \leq \theta\leq \pi/2$, the label of the output state is $1$; otherwise, the label is $0$. The classification circuit model (a quantum convolutional neural network) uses the same structure in TensorFlow Quantum~\cite{broughton2020tensorflow}, shown in Fig.~5b of Appendix~E in the extended version of this paper~\cite{guan2020robustness}. The explicit parameterization of  $C_i,P_j$ can be found in \cite{broughton2020tensorflow}. The final measurement  $M = \{M_0 = \ketbra{0}{0}, M_1 = \ketbra{1}{1}\}$. Targeting to minimizing the MSE form of Eq.~(\ref{eq:learning}),
we use Adam optimizer to update all $C_i,P_j$.  We achieve $99.76\%$ training accuracy and $99.44\%$ validation accuracy.

\subsection{The Classification of MNIST}\label{sec:7-4}
Handwritten digit recognition is one of the most popular tasks in the classical machine learning zoo.    
The archetypical training and validation data come
from the MNIST dataset which consists of 55,000 training samples handwritten digits~\cite{lecun-mnisthandwrittendigit-2010}. These digits have been labeled by humans as representing
one of the ten digits from number 0 to 9, and are in the form of gray-scale images that contains $28\times 28$ pixels. 
 Each pixel has a grayscale value ranging from $0$ to $255$. Quantum machine learning has been used to distinguish a too simplified version of MNIST by downscaling the image sizes to $8\times 8$ pixels. Subsequently, the numbers represented by this version of MNIST can not be perceptually recognized~\cite{broughton2020tensorflow}. Here, we build up a quantum classifier to recognize a MNIST version of $16\times 16$ pixels (see second column images of Fig.~\ref{fig.mnist.robust1}). As demonstrated in~\cite{broughton2020tensorflow}, we select out $700$ images of number $3$ and $700$ images of number $6$ to form our training ($1000$) and validation ($400$) datasets. Then we downscale those $28\times 28$ images to $2^4\times 2^4$ images (fitting the size of quantum data),  and encode them into the pure states of $8$ qubits via amplitude encoding.  Amplitude encoding uses the amplitude of computational basis to represent vectors with normalization:
\[ (x_0, x_2, \dots, x_{n-1}) \to \sum_{i=0}^{n-1}\frac{x_i}{\sqrt{\sum_{j=0}^{n-1}\vert x_j\vert^2}}\ket{i}.\]
where $\{\ket{i}\}$ is a set of orthogonal basis of the 8 qubits state space.  
The normalization doesn't change the pattern of those images. For learning a quantum classifier, we use the QCNN model in Fig.~6 of Appendix~E in the extended version of this paper~\cite{guan2020robustness} and use measurement  $M = \{M_0 = \ketbra{+}{+}, M_1 = \ketbra{-}{-}\}$.  The output of measurement $M$  indicates the numbers: output $1$ for number  $3$ and output $0$ for  number $6$. The explicit parameterization of those $C_i,P_j$ can be found in \cite{broughton2020tensorflow}. Again we use Adam optimizer to update the model parameters minimizing the MSE form of Eq.(\ref{eq:learning}).  We finally achieve $98.4\%$ training accuracy and $97.5\%$ validation accuracy.


\subsection{Robustness Verification}\label{sec:7-5}
Now, we start to check the $\epsilon$-robustness for the above four well-trained classifiers presented in the previous four subsections. 

In practical applications, the value of robustness $\epsilon$ in Definition~\ref{def:adversary} represents the ability of state preparation by quantum controls. For example, the state-of-the-art is that a single qubit can be prepared with fidelity 99.99\% (e.g.~\cite{myerson2008high,burrell2010scalable}). Here, we choose four different values of $\epsilon$ in each experiment. 
\begin{table}
     \vspace{-0.6cm}
    \centering
    \resizebox{0.8\linewidth}{!}{
    \renewcommand\arraystretch{1.2}
    \setlength{\arrayrulewidth}{.1em}
    \begin{tabular}{lcccc}
        \cline{2-5}
         & \multicolumn{4}{c}{Robust Accuracy (in Percent)} \\
         \cline{2-5}
         &  $\epsilon = 0.001$ & $\epsilon=0.002$ & $\epsilon = 0.003$ & $\epsilon = 0.004$ \\
         \cline{1-5}
         \makecell[c]{Robust Bound \\ (Lemma~\ref{Theo:checkrobust} - Algorithm~\ref{algo:underapproximation})} & $88.13$ & $75.88$ & $58.88$ & $38.25$ \\
         \makecell[c]{Robustness Algorithm \\ (Theorem~\ref{Theo:SDP} - Algorithm~\ref{algo:robustcheck})} & $90.00$ & $76.50$ & $59.75$ & $38.88$ \\
         \cline{1-5}
         & \multicolumn{4}{c}{Verification Times (in Seconds)} \\
         \cline{1-5}
         \makecell[c]{Robust Bound \\ (Lemma~\ref{Theo:checkrobust} - Algorithm~\ref{algo:underapproximation})} & $0.0050$ & $0.0048$ & $0.0047$ & $0.0048$ \\
         \makecell[c]{Robustness Algorithm \\ (Theorem~\ref{Theo:SDP} - Algorithm~\ref{algo:robustcheck})} & $1.3260$ & $2.7071$ & $4.6285$ & $6.9095$ \\
         \cline{1-5}
    \end{tabular}}
    \caption{Verification Results of Quantum Bits Classification}
    \label{tab:QBC}
    \vspace{0cm}
\end{table}
To show the scalability of our robust bound given in Lemma~\ref{lem:bound}, we use it to develop an algorithm (Algorithm~3 in Appendix~F of the extended version of this paper~\cite{guan2020robustness}) to under-approximate the robust accuracy, which is computed by Algorithm~\ref{algo:robustcheck}. 
Algorithm~\ref{algo:underapproximation} is a subroutine of Algorithm~\ref{algo:robustcheck} without calling an SDP solver (whenever a potential non-robust state can be detected by the robust bound in Lemma~\ref{lem:bound}). We compare the verification times by Algorithms~\ref{algo:robustcheck} and~3.  

The experiments are done on a computer with the following configurations: Intel(R) Core(TM) i7-9700 CPU @ 3.00GHz $\times$ 8 Processor, 15.8 GiB Memory, Ubuntu 18.04.5 LTS, with CVXPY: Python Software for Disciplined Convex Programming~\cite{diamond2016cvxpy} for solving SDP,  and a SciPy solver for finding the minimum of constrained nonlinear multivariable function for solving QCQP.
\begin{table}
    \vspace{0cm}
    \centering
    \resizebox{0.8\linewidth}{!}{
    \renewcommand\arraystretch{1.2}
    \setlength{\arrayrulewidth}{.1em}
    \begin{tabular}{lcccc}
        \cline{2-5}
         & \multicolumn{4}{c}{Robust Accuracy  (in Percent)} \\
         \cline{2-5}
         &  $\epsilon = 0.0001$ & $\epsilon=0.0002$ & $\epsilon = 0.0003$ & $\epsilon = 0.0004$ \\
         \cline{1-5}
         \makecell[c]{Robust Bound \\ (Lemma~\ref{Theo:checkrobust} - Algorithm~\ref{algo:underapproximation})} & $99.20$ & $98.80$ & $98.60$ & $98.30$ \\
         \makecell[c]{Robustness Algorithm \\ (Theorem~\ref{Theo:SDP} - Algorithm~\ref{algo:robustcheck})} & $99.20$ & $98.80$ & $98.60$ & $98.40$ \\
         \cline{1-5}
         & \multicolumn{4}{c}{Verification Times (in Seconds)} \\
         \cline{1-5}
         \makecell[c]{Robust Bound \\ (Lemma~\ref{Theo:checkrobust} - Algorithm~\ref{algo:underapproximation})} & $1.4892$ & $1.4850$ & $1.4644$ & $1.4789$\\
         \makecell[c]{Robustness Algorithm \\ (Theorem~\ref{Theo:SDP} - Algorithm~\ref{algo:robustcheck})} & $19.531$ & $25.648$ & $28.738$ & $33.537$ \\
         \cline{1-5}
    \end{tabular}}
    \caption{Verification Results of Quantum Phase Recognition.}
    \label{tab:QPR}
     \vspace{-0.8cm}
\end{table}
The experimental results are given in Tables~\ref{tab:QBC}--\ref{tab:MNIST}.  As an example, we illustrate the details of the result for the case of $\epsilon=0.001$ in Table~\ref{tab:QBC}. First, we only apply our robust bound in Lemma~\ref{lem:bound} to pick up all potential non-robust states from the 800 points in the training dataset. Then $95$ points are left. Thus, the under-approximation of the robust accuracy computed by Algorithm~3 (in Appendix~F of the extended version of this paper~\cite{guan2020robustness}) is  88.13\%. Next, we check the  $0.001$-robustness  by Algorithm~\ref{algo:robustcheck}. Indeed, only $80$ of the points detected by the above robust bound are non-robust and the exact robust accuracy is 90.00\%. We also compare the verification time of the two approaches to the robust accuracy. See the second column in  Table~\ref{tab:QBC} for the detail, and other experiment results of $\epsilon$-robustness are also summarized in the same table. 
\begin{table}
    \vspace{-0.2cm}
    \centering
    \resizebox{0.8\linewidth}{!}{
    \renewcommand\arraystretch{1.2}
    \setlength{\arrayrulewidth}{.1em}
    \begin{tabular}{lcccc}
        \cline{2-5}
         & \multicolumn{4}{c}{Robust Accuracy  (in Percent)} \\
         \cline{2-5}
         &  $\epsilon = 0.0001$ & $\epsilon=0.0002$ & $\epsilon = 0.0003$ & $\epsilon = 0.0004$ \\
         \cline{1-5}
         \makecell[c]{Robust Bound \\ (Lemma~\ref{Theo:checkrobust} - Algorithm~\ref{algo:underapproximation})} & $99.05$ & $98.81$ & $98.21$ & $97.86$ \\
         \makecell[c]{Robustness Algorithm \\ (Theorem~\ref{Theo:SDP} - Algorithm~\ref{algo:robustcheck})} & $100.0$ & $100.0$ & $100.0$ & $100.0$ \\
         \cline{1-5}
         & \multicolumn{4}{c}{Verification Times (in Seconds)} \\
         \cline{1-5}
         \makecell[c]{Robust Bound \\ (Lemma~\ref{Theo:checkrobust} - Algorithm~\ref{algo:underapproximation})} & $1.2899$ & $1.2794$ & $1.2544$ & $1.2567$ \\
         \makecell[c]{Robustness Algorithm \\ (Theorem~\ref{Theo:SDP} - Algorithm~\ref{algo:robustcheck})} & $209.52$ & $244.79$ & $325.97$ & $365.30$\\
         \cline{1-5}
    \end{tabular}}
    \caption{Verification Results of Cluster Excitation Detection}
    \label{tab:CED}
     \vspace{0cm}
\end{table}
\begin{table}
    \vspace{0cm}
    \centering
    \resizebox{0.8\linewidth}{!}{
    \renewcommand\arraystretch{1.2}
    \setlength{\arrayrulewidth}{.1em}
    \begin{tabular}{lcccc}
        \cline{2-5}
         & \multicolumn{4}{c}{Robust Accuracy  (in Percent)} \\
         \cline{2-5}
         &  $\epsilon = 0.0001$ & $\epsilon=0.0002$ & $\epsilon = 0.0003$ & $\epsilon = 0.0004$ \\
         \cline{1-5}
         \makecell[c]{Robust Bound \\ (Lemma~\ref{Theo:checkrobust} - Algorithm~\ref{algo:underapproximation})} & $99.70$ & $99.40$ & $99.30$ & $99.20$ \\
         \makecell[c]{Robustness Algorithm \\ (Theorem~\ref{Theo:SDP} - Algorithm~\ref{algo:robustcheck})} & $99.80$ & $99.60$ & $99.30$ & $99.30$ \\
         \cline{1-5}
         & \multicolumn{4}{c}{Verification Times (in Seconds)} \\
         \cline{1-5}
         \makecell[c]{Robust Bound \\ (Lemma~\ref{Theo:checkrobust} - Algorithm~\ref{algo:underapproximation})} & $0.0803$ & $0.1315$ & $0.0775$ & $0.0811$\\
         \makecell[c]{Robustness Algorithm \\ (Theorem~\ref{Theo:SDP} - Algorithm~\ref{algo:robustcheck})} & $0.3955$ & $0.6751$ & $0.7653$ & $0.8855$ \\
         \cline{1-5}
    \end{tabular}}
    \caption{Verification Results of the Classification of MNIST}
    \label{tab:MNIST}
     \vspace{-0.8cm}
\end{table}
Tables~\ref{tab:QBC}--\ref{tab:MNIST} for the verification results show that in all of these experiments, the robust bound obtained in Lemma~\ref{lem:bound}  scales very well, and the robustness verification by Algorithm~3 costs significantly less time ($<2 s $) than the way of  computing the optimal robust bound by Algorithm~\ref{algo:robustcheck}. For example, for quantum phase recognition, for $\epsilon=0.0001,0.0002$ and $0.0003$, the under-approximation of the robust accuracy  is exactly same to the real value. Even for the last case of $\epsilon=0.0004$, only $0.1\%$ difference is got. Furthermore, from the tables, the verification time of Algorithm~\ref{algo:robustcheck} is increasing with the value of $\epsilon$, while the running time of the method by the robust bound is almost unchanged. This is because the former algorithm  uses a SDP or QCQP solver to search all possible adversarial examples for the potential non-robust states picked up by the robust bound, and the number of these states are growing up with the value of $\epsilon$.  These counter-examples detected by the algorithm confirms that our robustness framework is effective. 
For instance, see Fig.~\ref{fig.mnist.robust1} for two visualized adversarial examples generated by  Algorithm~\ref{algo:robustcheck} with a QCQP solver. As we can see, the benign and adversarial images are perceptually similar. This also proves that our robustness verification algorithm can detect not only quantum but also classical adversarial examples.

\begin{figure}[h]
    \centering    
    \includegraphics[width=0.95\linewidth]{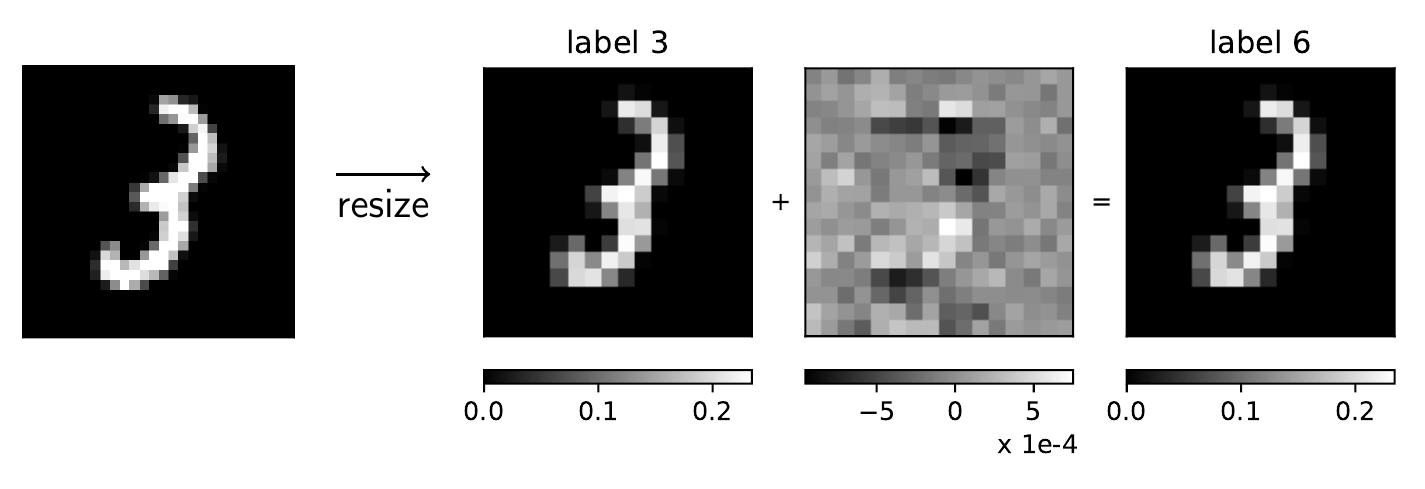}
            
    \includegraphics[width=0.95\linewidth]{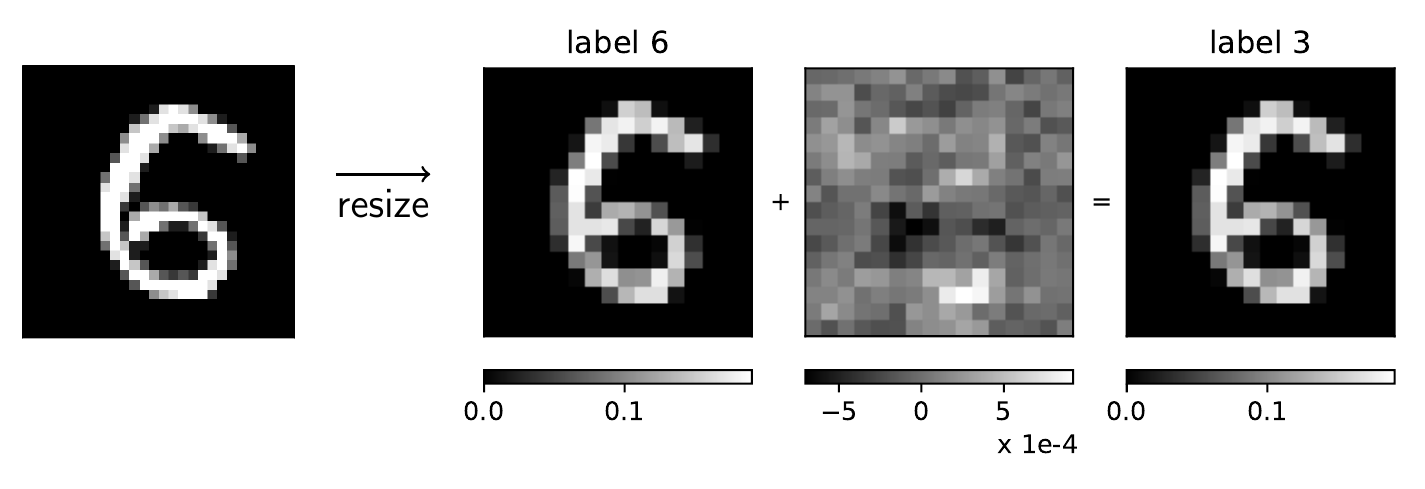}
    \caption{Two training states and their adversarial examples generated by Algorithm~\ref{algo:robustcheck} with a QCQP solver: the first column images are $28\times 28$ benign data from MNIST; The second column shows the two downscaled $16\times 16$ grayscale images; The last column images are decoded from adversarial examples founded by Algorithm~\ref{algo:robustcheck}. The third column images are the grayscale difference between benign and adversarial images.}
    \label{fig.mnist.robust1}
    \vspace{ 0 cm}
\end{figure}

\section{Conclusion}
In this work, we initiate the research of the formal robustness verification of quantum machine learning algorithms against unknown quantum noise. We found an  analytical robustness bound which can be efficiently computed to under-approximate the robust accuracy in practical applications. Furthermore, we developed a robustness verification algorithm that can exactly verify the $\epsilon$-robustness of quantum machine learning algorithms and provides useful counter-examples for the adversarial training.  

For topics for future research, it should be useful in practical applications to find an efficient method that over-approximates the robust accuracy of quantum classifiers. Combined the under-approximation approach developed in this work, it can help us to more accurately and fast estimate the robust accuracy. In classical machine learning, there exist some works in the literature to achieve this task. For instance, ImageStars, a new set representation, is introduced in ~\cite{tran2020verification} to perform efficient set-based analysis by combining operations on concrete images with linear programming, which leads to eﬃcient over-approximative analysis of classical convolutional neural networks. 

Tensor networks are one of the best-known data structures for implementing large-scale quantum classifiers (e.g. QCNNs with 45 qubits in~\cite{cong2019quantum}). 
For practical applications, we are going to incorporate tensor networks into our robustness verification algorithm so that it can scale up to achieve the demand of NISQ devices (of $\geq$ 50 qubits).

More generally, further investigations are required to better understand the role of the robustness in quantum machine learning, especially through more experiments on real world applications like learning phases of quantum many-body systems.  
\section*{Acknowledgments}
We would like to thank the anonymous reviewers for their insightful comments. This work was partly supported by the National Key R\&D Program of China (Grant No: 2018YFA0306701), the National Natural Science Foundation of China (Grant No: 61832015) and the Australian Research Council (ARC) under grant No.DP210102449.
%
%
%

\bibliographystyle{unsrt} 
\bibliography{main}
\clearpage
\section*{Appendix}
\appendix
\section{Fidelity v.s. Trace Distance}\label{appendix:fidelity_trace}
In this section, we give more details of the reason that fidelity is better than trace distance as the metric in the study of the robustness of quantum machine learning against quantum noise.

As we know, a quantum computation can be modeled by a super-operator $\e$~\cite{nielsen2010quantum}. Under quantum noise, $\e$ is disturbed to be another super-operator $\f$. How to measure the disturbance with some metric $d(\e,\f)$? We have to apply them to a quantum system prepared in a state $\rho$. Then we obtain the  optimal discrimination by maximizing over $\rho$:
\begin{equation}\label{eq:metric}
d(\e,\f)=\max_{\rho}D(\e(\rho),\f(\rho))
\end{equation}
where $D(\cdot,\cdot)$ is some metric of state distinguishability. So we need to find a good metric of state distinguishability. Of all good candidates, fidelity and trace distance are two of the most widely used metrics. 

 Trace distance quantifies the maximum probability $P_{correct}$ of correctly guessing whether $\rho$ or $\sigma$ was prepared after making a measurement~\cite{nielsen2010quantum}:
     \[P_{correct}=\frac{1}{2}+\frac{1}{2}T(\rho,\sigma)\]
     where \[T(\rho,\sigma)=\frac{1}{2}\|\rho-\sigma\|_{tr}\]
     is  the trace distance between $\rho$ and $\sigma$, and $\|\rho-\sigma\|_{tr}$ is trace norm.
     Thus when $N$ copies of $\rho$ and $\sigma$ are provided, 
     \[P_{correct}=\frac{1}{2}+\frac{1}{2}T(\rho^{\otimes N},\sigma^{\otimes N}).\]
      So how does  $T(\rho^{\otimes N},\sigma^{\otimes N})$ behave as $N$ tends to infinite as the scenario of quantum machine learning? The quantum Chernoff bound explains this~\cite{audenaert2007discriminating}:
     \[T(\rho^{\otimes N},\sigma^{\otimes N})\approx 1-\frac{1}{2}const\cdot e^{-C(\sigma,\rho)N}\] 
   where $const$ is a non-zero positive constant and $C(\sigma,\rho)$ is the quantum Chernoff bound which is generally hard to be computed. 
   And when the states are pretty close, it’s close to the infidelity (1-$F(\rho,\sigma)$)~\cite{blume2017distinguishable}:
   \[\frac{1-F(\rho,\sigma)}{2}\leq C(\rho,\sigma)\leq 1-F(\rho,\sigma).\]
   Therefore, let $\epsilon>0$ be a small enough threshold value,
   $$P_{correct}\approx 1-\epsilon \Rightarrow \frac{2\eta}{1-F(\rho,\sigma)}\geq N\geq \frac{\eta}{1-F(\rho,\sigma)},$$ 
   where $\eta=-\ln \frac{4\epsilon}{const}$.
   In terms of trace distance, we have: 
   $$P_{correct}\approx 1-\epsilon \Rightarrow \frac{\eta}{\|\rho-\sigma\|_{tr}^2}\geq N\geq \frac{2\eta}{2\|\rho-\sigma\|_{tr}-\|\rho-\sigma\|_{tr}^2} $$
   which is derived by the Fuchs-van de Graaf inequalities
   \begin{equation*}
    \begin{aligned}
        &1-\sqrt{F(\rho,\sigma)}\leq T(\rho,\sigma)\leq \sqrt{1-F(\rho,\sigma)}\\
        \Rightarrow& \frac{2\|\rho-\sigma\|_{tr}-\|\rho-\sigma\|_{tr}^2}{2}\leq C(\rho,\sigma)\leq \|\rho-\sigma\|_{tr}^2.
    \end{aligned}
   \end{equation*}

    By the above inequalities about $N$, we can see that 
    infidelity gives a better (linear) estimation of how many samples we need to accurately discriminate $\rho$ from $\sigma$ than trace distance which gives a polynomial estimation of $N$. For instance, when $\epsilon=\frac{1}{4}const\cdot e^{-100}$ (then $\eta=100$), for the same value of infidelity and trace distance, saying $0.01$, the estimations of $N$ is $[10^4,2\times 10^4]$ and $[10051,10^6]$, respectively. Thus fidelity is more suitable as the metric $D(\rho,\sigma)=1-F(\rho,\sigma)$ in Eq.(\ref{eq:metric}) in the scenario of quantum machine learning that many copies of quantum states are provided.
   

\section{Proof of Lemma~\ref{Theo:checkrobust}}\label{Appendix: proof_checkrobust}
\begin{proof}
    For any $\sigma$ with $F(\rho, \sigma) \geq 1 - \epsilon$, by the monotonicity of the fidelity~\cite[Theorem 9.6]{nielsen2010quantum}, 
    we have \[F(\e(\rho), \e(\sigma))\geq F(\rho, \sigma) \geq 1 - \epsilon.\] 
    Let
    \[\vec{p} = (\sqrt{p_1}, \sqrt{p_2}, \ldots, \sqrt{p_n})\in\mathbb{R}^n, \text{ where } p_k = \tr(M_{k}^\dagger M_k\e(\rho))\]
    \[\vec{q} = (\sqrt{q_1}, \sqrt{q_2}, \ldots, \sqrt{q_n})\in\mathbb{R}^n,  \text{ where } q_k = \tr(M_{k}^\dagger M_k\e(\sigma))\]
    Without loss of generality, we assume $p_1\geq p_2\geq \ldots \geq p_n$. We use $\vec{x}\cdot\vec{y}$ and $\norm{\vec{x}} = \sqrt{\vec{x}\cdot\vec{x}}$ to denote the inner product of $\vec{x},\vec{y}$ and the $2$-norm, respectively. Then $\vec{p}$ and $\vec{q}$  both have unit norms:\[\norm{\vec{p}} = \sqrt{\sum_k p_k} = 1 \text{ and } \norm{\vec{q}} =\sqrt{\sum_k q_k}= 1.\]  By the definition of fidelity, we have 
    \[\vec{p}\cdot\vec{q} = \sum_{k}\sqrt{p_kq_k} \geq \sqrt{F(\e(\rho), \e(\sigma))} \geq \sqrt{1-\epsilon}.\]
    
    We see that any probability  distribution  $R = (r_1, r_2, \ldots, r_n),$ with $r_k\geq 0 $ and $  \sum_k r_k = 1$ can be viewed as a unit vector $(\sqrt{r_1}, \sqrt{r_2}, \ldots, \sqrt{r_n})$ and the fidelity of two distributions is the inner product of their corresponding unit vectors.
    
    Next, we prove that the unit vector form $\vec{p}$ of $\rho$ can be used to obtain a robust bound for it.
    First, we find a vector that has maximum inner product with $\vec{p}$ and is within another class rather than the belonging class of $\rho$. This can be done by solving the following optimization problem:
    \begin{equation}\label{eq:fidelity1}
        \begin{array}{rcl}
            \max. & \vec{x}\cdot\vec{p} \\
            \mbox{s.t.} & \norm{\vec{x}} = 1 \\
            & \prod_{j=2}^n(x_1 - x_j) &= 0 \\
            & \vec{x} = (x_1, \ldots, x_n) &\in\mathbb{R}^n \\
            & \vec{x} &\geq 0
            \end{array}
    \end{equation}
    With constraint $\norm{\vec{x}} = 1$, we have $\vec{x}\cdot\vec{p} = (a\vec{x})\cdot\vec{p}/\norm{a\vec{x}}$ for any $a>0$. Thus, let $\vec{y} = a\vec{x}$, the above optimization problem is rewritten as: 
    \begin{equation}\label{eq:fidelity2}
        \begin{array}{rcl}
            \max. & {\vec{y}\cdot\vec{p}}/{\norm{\vec{y}}} \\
            \mbox{s.t.} & \norm{\vec{y}} &> 0 \\
            & \prod_{j=2}^n(y_1 - y_j) &= 0 \\
            & \vec{y} = (y_1, \ldots, y_n) &\in\mathbb{R}^n \\
            & \vec{y} &\geq 0
            \end{array}
    \end{equation}
    The objective function is not changed by multiplying the numerator and denominator with a positive number. Thus, we can assume $\vec{y}\cdot\vec{p} = 1$ and obtain the following problem
    \begin{equation}\label{eq:fidelity3}
        \begin{array}{rcl}
            \min. & {\norm{\vec{y}}}^2 \\
            \mbox{s.t.} & \vec{y}\cdot\vec{p} &= 1 \\
            & \prod_{j=2}^n(y_1 - y_j) &= 0 \\
            & \vec{y} = (y_1, \ldots, y_n) &\in\mathbb{R}^n \\
            & \vec{y} &\geq 0
            \end{array}
    \end{equation}
    By the assumption of  $p_1\geq p_2\geq \ldots\geq p_n$, we claim that the constraint 
    $$\prod_{j=2}^n(y_1 - y_j) = 0 $$
    can be replaced by $y_1 - y_2 = 0$. Suppose the optimal solution is 
    \[\vec{y}^* = (y_1^*, y_2^*, \ldots, y_j^*, \ldots, y_n^*) \text{ and } y_1^* = y_j^*,\quad j \neq 2\] 
    Then let 
    \[\vec{x}(\lambda) = \lambda \vec{p} + (1-\lambda)\vec{y}^{*},\quad  0\leq \lambda\leq 1\]
    For the case of $y_1^* \leq y_2^*$, with $p_1 \geq p_2$,  we have \[\sqrt{p_1} - \sqrt{p_2} \geq 0 \text{ and } y_1^* - y_2^* \leq 0.\]
    then there exists $0\leq \lambda_0 \leq 1$ such that \[\lambda_0(\sqrt{p_1} - \sqrt{p_2}) + (1-\lambda_0)(y_1^* - y_2^*) = 0\]
    which is equivalent to
    \[ \lambda_0\sqrt{p_1} +(1-\lambda_0)y_1^* = \lambda_0\sqrt{p_2} +(1-\lambda_0)y_2^*.\]
    So the first and second elements of $\vec{x}(\lambda_0)$ are equal. 
    
    We have
    \[\vec{x}(\lambda_0) - \vec{p} = (1-\lambda_0)(\vec{y}^* - \vec{p})\qquad
    \text{ and } \qquad(\vec{y}^* - \vec{p})\cdot \vec{p} = 0,\] which means that $\vec{y}^* - \vec{p}$ and $\vec{p}$ are orthogonal. Then 
    \begin{align*}
        \norm{\vec{x}(\lambda_0)}^2 &= \norm{\vec{x}(\lambda_0) - \vec{p} + \vec{p}}^2 \\
        &= \norm{\vec{x}(\lambda_0) - \vec{p}}^2 +\norm{\vec{p}}^2 \\
        &= (1-\lambda_0)^2\norm{\vec{y}^* - \vec{p}}^2 + \norm{\vec{p}}^2 \\
        &\leq \norm{\vec{y}^* - \vec{p}}^2 + \norm{\vec{p}}^2 \\
        &= \norm{\vec{y}^* - \vec{p} + \vec{p}}^2 \\
        &= \norm{\vec{y}^*}^2.
    \end{align*}
    We find a vector $\vec{x}(\lambda_0)$ satisfies $y_1 - y_2 = 0$ and $\norm{\vec{x}(\lambda_0)}^2 \leq \norm{\vec{y}^*}^2$.
    
    Now we consider the situation of  $y_1^* > y_2^*$. We have $y_j^* = y_1^* > y_2^*$ and $p_2 \geq p_j$, and  following the same analysis in the above,  we can find $0 < \lambda_0 \leq 1$ such that the second and $j$-th elements of $\vec{x}(\lambda_0)$ are equal and
    \begin{align*}
        \norm{\vec{x}(\lambda_0)}^2 &= \norm{\vec{x}(\lambda_0) - \vec{p} + \vec{p}}^2 \\
        &= \norm{\vec{x}(\lambda_0) - \vec{p}}^2 +\norm{\vec{p}}^2 \\
        &= (1-\lambda_0)^2\norm{\vec{y}^* - \vec{p}}^2 + \norm{\vec{p}}^2 \\
        &< \norm{\vec{y}^* - \vec{p}}^2 + \norm{\vec{p}}^2 \\
        &= \norm{\vec{y}^* - \vec{p} + \vec{p}}^2 \\
        &= \norm{\vec{y}^*}^2
    \end{align*}
    which is contradict to $\norm{\vec{y}^*}^2$ is the optimal value. Thus, the optimal value is achieved at a vector $\vec{y}$ satisfying $y_1 - y_2 = 0$, then the problem can be reformulated as
    \begin{equation}\label{eq:fidelity4}
        \begin{array}{rcl}
            \min. & {\norm{\vec{y}}}^2 \\
            \mbox{s.t.} & \vec{y}\cdot\vec{p} &= 1 \\
            & y_1 - y_2 &= 0 \\
            & \vec{y} = (y_1, \ldots, y_n) &\in\mathbb{R}^n \\
            & \vec{y} &\geq 0
            \end{array}
    \end{equation}
    Using the Lagrange multiplier method, the optimal value of problem~(\ref{eq:fidelity4}) is 
    \[\frac{2}{2 - (\sqrt{p_1}-\sqrt{p_2})^2}.\]
    Then the optimal value of problem~(\ref{eq:fidelity1}) is 
    \[\sqrt{1 - \frac{(\sqrt{p_1}-\sqrt{p_2})^2}{2}}.\]
    Therefore, if
    \[\sqrt{1 - \frac{(\sqrt{p_1}-\sqrt{p_2})^2}{2}} < \sqrt{1 - \epsilon}  \Leftrightarrow\sqrt{p_1} - \sqrt{p_2} > \sqrt{2\epsilon},\]
    then for any vector  $\vec{q}$ with $\sqrt{1-\epsilon} \leq \vec{p}\cdot\vec{q}$, we have the corresponding quantum state $\sigma$ with $F(\rho, \sigma) \geq 1 - \epsilon$ and $\sigma$ is classified into the class of $\rho$. In other words, $\rho$ is $\epsilon$-robust.
\end{proof}

\section{Proof of Theorem~\ref{Theo:checkrobust}}\label{appendix:proof_robustcheck}
\begin{proof} The sufficient direction directly follows from the definition of adversarial robustness in Definition~\ref{def:adversary}. For the necessary 
direction, if there is $k\in \c$ such that there is a solution $\sigma$ in the above problem, then $\a(\sigma)=k$, i.e., $\sigma$ is classified in the class $k$. That is $\sigma$ is an $\epsilon$-adversarial example of $\rho$ and $\rho$ is not $\epsilon$-robust. 

Now we prove that the above problem can be reduced to a SDP.  First,
it is easy to verify $\dh$, the set of quantum states on $\h$, is a convex set of positive semidefinite matrices. Computing $F(\rho,\sigma)$ can be reduced to solving a SDP~\cite{watrous2009semidefinite}. Thus replacing $F(\rho,\sigma)$ by the SDP, the above problem is a SDP problem by noting that   $\tr(\sigma)=1$ is equivalent to $\tr(\sigma)\leq 1$ and $-\tr(\sigma)\leq -1$. 
\end{proof}

\section{Pure State Robustness Verification}\label{appendix:pure_state_robustness}
In this section, we discuss the robustness verification for pure states, i.e. pure state $\ket{\psi}$ against adversarial examples of pure states.  That is that all quantum states in the training dataset and their adversarial examples are  pure states. Then, by Theorem~\ref{Theo:SDP} and noting that the set of all pure states is not convex, computing the optimal robust bound for pure states is not an SDP. But we can prove it is a  Quadratically Constrained Quadratic Program (QCQP), which is hard to be solved.

In mathematical optimization, a  QCQP is an optimization problem in which both the objective function and the constraints are quadratic functions. It has the form
\begin{align*}
    \min. \quad&\frac{1}{2} x^TP_0x+q_0^{T}x  \\
     \text{subject to} \quad& \frac 1 2 x^T P_i x + q_i^T x + r_i \leq 0, \quad \text{for } i = 1, \ldots, m \\
     & A x = b
\end{align*}
where $P_0, P_1, \ldots, P_m$ are $n\times n$ matrices and $x\in\mathbb{R}^n$ is the optimization variable. The problem is convex, if $P_0, P_1,\ldots, P_m$ are all positive semidefinite, but non-convex, if these matrices are neither positive nor negative semidefinite. In general, solving QCQP is an NP-hard problem.

\begin{corollary}[Pure State Optimal Robust Bound] \label{Theo:QCQP1}
$\a=(\e,\{M_k\}_{k\in\c})$ be a quantum classifier and $\ket{\psi}$ is a pure state with $\a(\ketbra{\psi}{\psi})=l$. Then $\delta$ is the optimal robust bound of pure state $\ket{\psi}$ against adversarial examples of pure states, where $\delta=\min_{k\not=l}\delta_k$ and $\delta_k$ is the solution of the following QCQP:
\begin{align*}
\delta_k=\min_{\ket{\varphi}\in \cH}\quad& 1 - \braket{\varphi}{\psi}\braket{\psi}{\varphi}\\
\text{subject to} \quad    &\braket{\varphi}{\varphi}=1 \\
    &\bra{\varphi}\cE^{\dagger}(M_{l}^\dagger M_l-M_k^{\dagger}M_k)\ket{\varphi}\leq 0 
\end{align*}
where if the QCQP is unsolved, then  $\delta_k=+\infty.$
\end{corollary}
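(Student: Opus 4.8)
The plan is to follow the proof of Theorem~\ref{Theo:SDP} almost verbatim, specializing every ingredient to pure states and then verifying that the resulting program is a QCQP rather than an SDP. The optimality part is inherited wholesale; the genuinely new work is the quadratic reformulation together with the observation that convexity is lost.

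First I would specialize the fidelity. For $\rho=\ketbra{\psi}{\psi}$ and $\sigma=\ketbra{\vp}{\vp}$ the fidelity collapses to $F(\rho,\sigma)=|\braket{\psi}{\vp}|^2=\braket{\vp}{\psi}\braket{\psi}{\vp}=\bra{\vp}(\ketbra{\psi}{\psi})\ket{\vp}$, so the distance $D=1-F$ is exactly the objective $1-\braket{\vp}{\psi}\braket{\psi}{\vp}$, a Hermitian quadratic form in the amplitudes of $\ket{\vp}$ with the rank-one matrix $\ketbra{\psi}{\psi}$. Next I would rewrite ``misclassification in favour of class $k$''. By the rule $\a(\sigma)=\arg\max_k\tr(M_k^\dagger M_k\e(\sigma))$, the state $\ket{\vp}$ is assigned to $k$ over $l$ precisely when $\tr[(M_l^\dagger M_l-M_k^\dagger M_k)\e(\ketbra{\vp}{\vp})]\leq 0$. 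Introducing the adjoint super-operator $\e^\dagger$ determined by $\tr[A\,\e(\sigma)]=\tr[\e^\dagger(A)\,\sigma]$ (with Kraus form $\e^\dagger(A)=\sum_k E_k^\dagger A E_k$) and substituting $\sigma=\ketbra{\vp}{\vp}$, this becomes $\bra{\vp}\e^\dagger(M_l^\dagger M_l-M_k^\dagger M_k)\ket{\vp}\leq 0$, i.e.\ the second constraint of the program; the normalization $\braket{\vp}{\vp}=1$ supplies the first.

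With these identifications the optimality argument is identical to Theorem~\ref{Theo:SDP}: $\delta_k$ is the least infidelity between $\ket{\psi}$ and a pure state that the classifier places in class $k$ rather than $l$ (with the convention $\delta_k=+\infty$ when the feasible set is empty), so $\delta=\min_{k\neq l}\delta_k$ is the infidelity to the nearest pure adversarial example, which by Definition~\ref{def:adversary} is the optimal pure-state robust bound. The same care about the argmax tie-breaking at the boundary (the separating inequality is $\leq 0$, not $<0$) carries over unchanged.

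Finally, and this is the crux that distinguishes the corollary from Theorem~\ref{Theo:SDP}, I would verify the QCQP structure and the loss of convexity. Writing $\ket{\vp}=\vec u+i\vec v$ with $\vec u,\vec v\in\reals^n$ and stacking $x=(\vec u;\vec v)\in\reals^{2n}$, every Hermitian form $\bra{\vp}H\ket{\vp}$ becomes a real quadratic form $x^{T}\tilde H x$ for a real symmetric $\tilde H$; hence the objective, the class constraint, and the normalization (split as $x^{T}x\le 1$ and $-x^{T}x\le -1$) are all quadratic, matching the QCQP template recalled above. The main obstacle to highlight is that none of these matrices need be positive semidefinite --- the objective matrix $-\ketbra{\psi}{\psi}$ is negative semidefinite, the normalization pair is sign-indefinite, and $\e^\dagger(M_l^\dagger M_l-M_k^\dagger M_k)$ is generically indefinite --- so, unlike the mixed-state case where $\dh$ is convex and $1-F$ admits a semidefinite representation, here the feasible region is the non-convex unit sphere and the program is a genuine (NP-hard) QCQP.
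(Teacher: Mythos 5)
Your proposal is correct and follows essentially the same route as the paper's proof: specialize the fidelity to $F(\ketbra{\psi}{\psi},\ketbra{\vp}{\vp})=\braket{\vp}{\psi}\braket{\psi}{\vp}$, turn the trace constraint into the quadratic form $\bra{\vp}\e^{\dagger}(M_l^{\dagger}M_l-M_k^{\dagger}M_k)\ket{\vp}\leq 0$ via the adjoint super-operator, split $\braket{\vp}{\vp}=1$ into two quadratic inequalities, and inherit the optimality argument from Theorem~\ref{Theo:SDP}. Your additional steps --- the explicit Kraus form of $\e^{\dagger}$, the realification $\ket{\vp}=\vec u+i\vec v$ showing every Hermitian form is a real quadratic form, and the indefiniteness/non-convexity discussion --- merely make explicit what the paper's terse proof and surrounding remarks assert, so there is no substantive difference.
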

\begin{proof}
First, we note that $F(\ketbra{\psi}{\psi},\ketbra{\varphi}{\varphi})=\braket{\varphi}{\psi}\braket{\psi}{\varphi}$ and $\tr(\ketbra{\varphi}{\varphi}A)=\bra{\varphi}A\ket{\varphi}$ for any matrix $A$. Then the result follows from Theorem~\ref{Theo:SDP} and the observation that $\braket{\varphi}{\varphi}=1$  is equivalent to
\[ \bra{\varphi}I\ket{\varphi}-1\leq 0 \textrm{ and } -\bra{\varphi}I\ket{\varphi}+1\leq 0,\]
where $I$ is the identity matrix on $\h$.
\end{proof}

As we can see, the above QCQP is non-convex, so we cannot efficiently compute the pure state optimal robust bound like the mixed state case by SDP.  However, there are some numerical tools designed to solve QCQP---not only compute the minimal value but also output a corresponding optimal solution $\ket{\varphi}$, as SDP solvers. Some methods have been developed to approximately solving non-convex QCQPs in a reasonable time. For example,  non-convex QCQPs with non-positive off-diagonal elements can be exactly solved by SDP relaxations~\cite{kim2003exact}. Therefore, there may have a polynomial-time algorithm to solve this specific form of QCQP, and we left this problem as a future research.

\section{Training Models}\label{appendix:traing_models}
\begin{figure}
\centering
\subfloat[]{\label{fig.qcnn.diagram}
      \centering
      \parbox[][3cm][c]{0.34\linewidth}{
        \includegraphics[width=\linewidth]{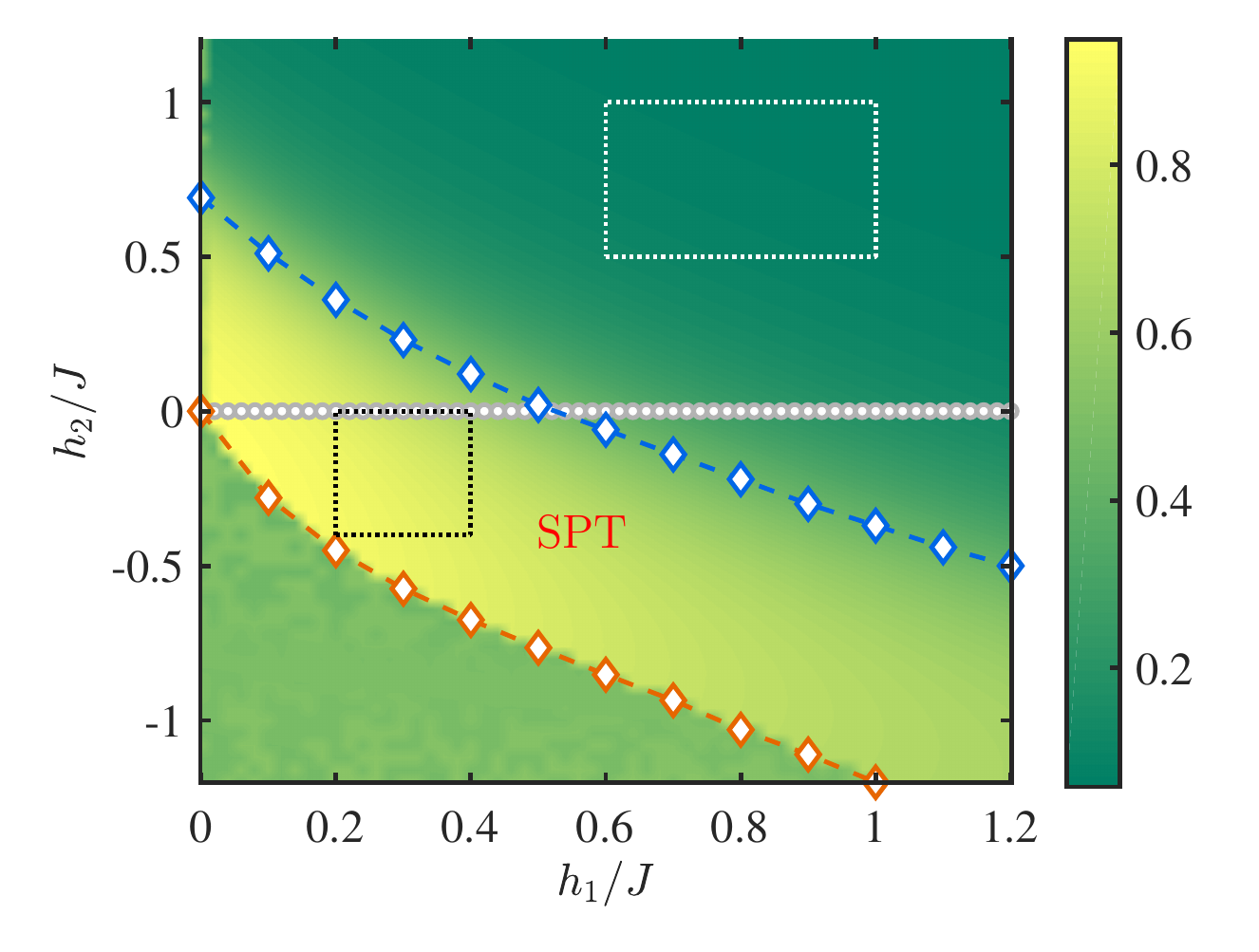}
      }
    }
\subfloat[]{\label{fig.qcnn.model}
      \centering
      \parbox[][3cm][c]{0.6\linewidth}{
        \includegraphics[width=\linewidth]{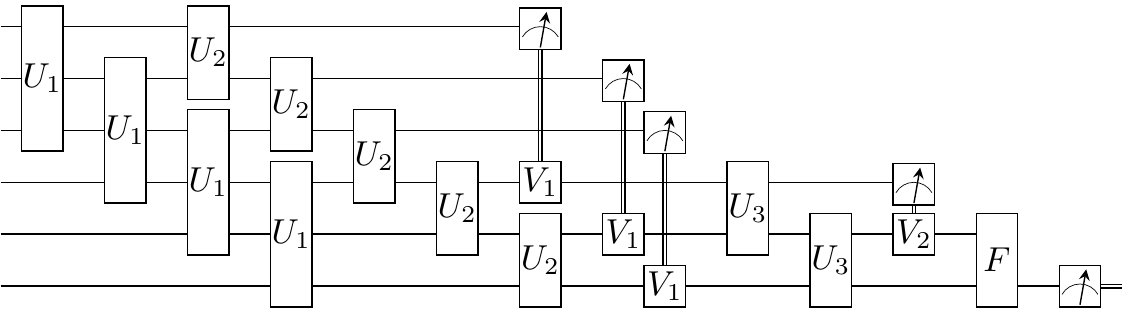}
      }
    }
    
\caption{ (a) The phase diagram obtained by our trained QCNN model for input size $N=6$ spins. (b) Our  QCNN circuit model. \iffalse(c) The MSE loss for our training and retraining (+8) processes.  (d) Accuracy for our training and retraining (+8) processes.\fi }
\end{figure}

\begin{figure}[htbp]
    \centering
    \subfloat[]{\label{fig.excitation.generator}
        \centering
        \includegraphics[width=0.205\linewidth]{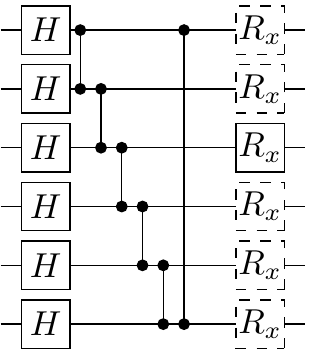}
    }
    \subfloat[]{\label{fig.excitation.model}
        \centering
        \includegraphics[width=0.74\linewidth]{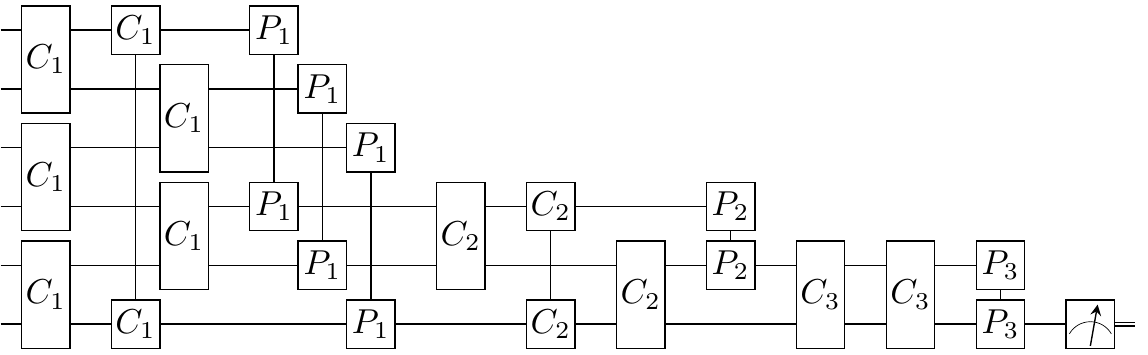}
    }
    \caption{(a) The circuit generating cluster state. (b) The classification model for cluster excitation detection.}
\end{figure}

\begin{figure}
    \centering
    \includegraphics[width=0.8\linewidth]{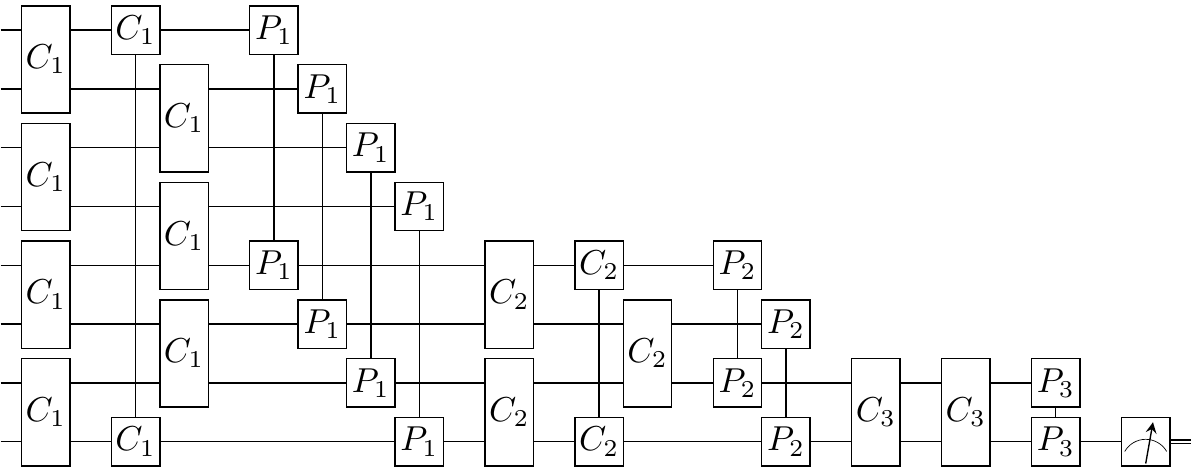}
    \caption{QCNN model  for the classification of MNIST}
    \label{fig.mnist.model}
\end{figure}
\section{Under-approximation Algorithm of Robust Accuracy}\label{appendix:under_robust_accuracy}
\begin{algorithm}[H]
\caption{UnderRobustAccuracy($\a,\epsilon,T$)}
\label{algo:underapproximation}
    \begin{algorithmic}[1]
    \Require  $\a=(\e,\{M_k\}_{k\in\c})$ is a well-trained quantum classifier, $\epsilon < 1$ is a real number, $T=\{(\rho_i,l_{i})\}$ is the training dataset of $\a$
    \Ensure A  under-approximation of robust accuracy URA
    \State $r=0$.\Comment{Record the number of potential non-robust states}
    \ForAll{$(\rho_i,l_i)\in T$}
    \State Let $p_1$ and $p_2$ be the first and second largest elements of $\{\tr(M_{k}^\dagger M_k\e(\rho_i))\}_k$, respectively.
    \If {$\sqrt{p_1}-\sqrt{p_2}\leq \sqrt{2\epsilon}$} \Comment{Applying the robust bound in Lemma~\ref{lem:bound}}
    \State $r=r+1$
    \EndIf
    \EndFor
    \State \Return URA$=1-\frac{r}{|T|}$.
    \end{algorithmic}  
\end{algorithm}

\end{document}